\theoremstyle{plain}
\newtheorem{theorem}{Theorem}[section]
\newtheorem{proposition}[theorem]{Proposition}
\newtheorem{lemma}[theorem]{Lemma}
\newtheorem{corollary}[theorem]{Corollary}
\theoremstyle{definition}
\theoremstyle{remark}
\newtheorem{remark}[theorem]{Remark}
\DeclareMathOperator{\Tr}{Tr}
\DeclareMathOperator{\diag}{diag}
\algnewcommand{\IfThenElse}[3]{
	\State \algorithmicif\ #1\ \algorithmicthen\ #2\ \algorithmicelse\ #3}
\title{Recovering Fairness Directly from Modularity: a New Way for Fair Community Partitioning}
\author{%
  Yufeng Wang\\
  School of Mathematics and Statistics\\
  Xidian University\\
  Xi'an, China \\
  \And
  Yiguang Bai\thanks{Corresponding authors.}\\
  School of Mathematics and Statistics\\
  Xidian University\\
  Xi'an, China \\
  \texttt{ygbai@xidian.edu.cn} \\
  \And
  Tianqing Zhu\\
  City Faculty of Data Science\\
  University of Macau\\
  Macau, China
  \And
  Ismail Ben Ayed\\
  École de technologie supérieure\\
  Université du Québec
  \And
  Jing Yuan\\
  College of Mathematical Medicine\\
  Zhejiang Normal University\\
  Jinhua, China
}
\begin{document}

\maketitle

\begin{abstract}
Community partitioning is crucial in network analysis, with modularity optimization being the prevailing technique. However, traditional modularity-based methods often overlook fairness, a critical aspect in real-world applications. To address this, we introduce protected group networks and propose a novel fairness-modularity metric. This metric extends traditional modularity by explicitly incorporating fairness, and we prove that minimizing it yields naturally fair partitions for protected groups while maintaining theoretical soundness. We develop a general optimization framework for fairness partitioning and design the efficient Fair Fast Newman (FairFN) algorithm, enhancing the Fast Newman (FN) method to optimize both modularity and fairness. Experiments show FairFN achieves significantly improved fairness and high-quality partitions compared to state-of-the-art methods, especially on unbalanced datasets.
\end{abstract}



\section{Introduction}

Community detection is a crucial scientific problem in network analysis, essential for understanding the structure and dynamics of complex networks. Related algorithms for community detection in a complex network are extensively utilized in various fields, including social networks, recommendation systems, bioinformatics, and traffic networks~\citep{fortunato2010community}.
Traditional algorithms primarily focus on maximizing community cohesiveness, often using modularity, a widely adopted metric introduced by \citet{fn_algorithm}, as a measure to evaluate the quality of the resulting partitions given its close correspondence with the intrinsic structural properties observed in real-world networks~\citep{fortunato202220}. 
This has led to the development of numerous popular algorithms based on modularity optimization, including the Fast Newman algorithm \citep{fn_algorithm}, the Clauset-Newman-Moore algorithm \citep{cnm_algorithm}, the Louvain algorithm \citep{FUA_algorithm}, and the fast complex network clustering algorithm \citep{FNCA_algorithm}. Other approaches include the random walk ant colony algorithm \citep{RWACO_algorithm}, the label propagation-based algorithm \citep{lpa_algorithm}, the intuitive-based Girvan-Newman algorithm \citep{gn_algorithm}, and information-based algorithm \citep{encoding_detection}.

While these traditional modularity-based community partitioning algorithms have achieved significant results in community detection,
a substantial limitation is that they often prioritize structural and cohesive community properties but omit fairness considerations. 
This can result in biased partitioning, especially when nodes are characterized by diverse attributes like gender, age, or race;
such biases lead to overrepresentation or underrepresentation of specific groups within the detected communities. 
The concept of fairness in community partitioning promotes the constraint that the distribution of relevant attributes within each identified community should match the distribution observed in the global network.
It is especially critical in sensitive application domains, such as social networks and criminal justice systems, where the deployment of biased algorithms has the potential to exacerbate existing societal inequalities.
The COMPAS algorithm, employed in the U.S. criminal justice system, exemplifies this issue, having been shown to exhibit racial bias in sentencing decisions~\citep{COMPAS}. 
Similar issues arise in other contexts; for example, biased power distribution may result in significant inequalities if regional needs are not addressed equitably, particularly during periods of resource scarcity. 
To address fairness in community partitioning, researchers have explored various approaches. 
\citet{FairSC_fair_algorithm} pioneered the concept of fair clustering, introducing "fairlets" to transform the problem into a more traditional clustering task and proposing approximation algorithms along with a "balance" metric for fairness. \citet{VFC_fair_algorithm} used Kullback-Leibler (KL) divergence to measure community fairness, incorporating it as a penalty in objective functions for algorithms like K-Means, K-Medians, and Ncut.  
\citet{bfkm_algorithm} similarly studied fairness in K-Means, adding a penalty based on the difference between the protected group proportions within a community and the overall population. And then optimized it using coordinate descent.
For spectral clustering, \citet{pmlr-v97-kleindessner19b} used linear constraints based on the balance metric, and \citet{pmlr-v206-wang23h} later proposed the s-FairSC algorithm to improve the computational efficiency of Kleindessner's FairSC approach. \citet{multi_view} developed a fair multi-view clustering algorithm, using the degree of protected group nodes as a penalty within the Rcut framework. \citet{fce_algorithm} achieved fair ensemble clustering through the vector rotation and the penalty term constructed by the Frobenius norm, which also ensured that the sizes of clusters are similar. Finally, \citet{10.1145/3625007.3627518} connected fair community partitioning with the concept of "modularity fairness."

A critical challenge of enforcing fairness in community partitioning is the potential trade-off between the effect of maximizing communities' internal cohesion and the achievement of fairness in each community. This necessitates identifying a suitable balance among these competing objectives.
While several studies introduced metrics for evaluating fairness in the context of modularity, no existing algorithm incorporates fairness directly within the modularity optimization procedure.
This motivates the present study, which proposes an integrated approach to community partitioning by incorporating fairness considerations within the established framework of modularity-based methods, thereby providing a more comprehensive and elegant approach to the problem of fair community partitioning.

Our main contributions to fair community partitioning in this study can be summarized as follows. Firstly, we introduce the concept of \emph{protected group network}, which presents a novel perspective for analyzing fairness in community partitioning. This perspective facilitates a deeper understanding of community fairness, which lays the groundwork in theory for fair community detection and derives the novel \emph{fairness-modularity} ($Q^P$) based on the modularity of the introduced protected group network. Particularly, we show that minimizing $Q^P$ directly leads to fair community partitioning. 
Moreover, we propose a new modularity optimization algorithm, the so-called FairFN algorithm, to achieve efficient and effective fair community partitioning. It pioneers a way to integrate fairness directly into modularity optimization without significantly increasing computational complexity. 
The proposed FairFN algorithm outperforms state-of-the-art fair clustering algorithms in fairness while obtaining better community partitioning results, even on unbalanced datasets. It presents a novel groundbreaking solution for fair community detection in both theory and numerics.

\section{Preliminaries}
{\bf Modularity of undirected graph:} Let $G=(V,E)$ be an undirected graph, with vertex set $V=\{v_1,v_2,\cdots,v_n\}$ and edge set $E$, and let $A = (A_{ij})\in R^{n\times n}$ denote the adjacency matrix of $G$. The objective of community partitioning is to divide $G$ into a set $\mathbb{C} =\{C_1,C_2,\cdots,C_k\}$ of $k$ communities, where $\bigcup_{u=1}^k C_u=V$, $C_i\cap C_j=\emptyset$ for any $i$ and $j$, and all $C_u$ are nonempty.

Let $k_i$ denote the degree of vertex $i$, and $K=(k_1,k_2,\cdots,k_n)^T$ the degree vector. $m=(\sum_i k_i)/2$ is thus the total number of edges in $G$.
The modularity $Q$ to evaluate the effect of a community partitioning $S=(S_{iu})\in R^{n\times k}$ is defined as \citep{gn_algorithm}:
\begin{equation}
    Q=\frac{1}{2m}\sum_{ij}\sum_{u} \left[A_{ij}-\frac{k_ik_j}{2m}\right]S_{iu}S_{ju}, \label{eq_modularity_matrix}
\end{equation}
where $A=(A_{ij})$ is the adjacency matrix of $G$ and
\begin{equation}
    S_{iu}=\begin{cases}
        1, \,\,\text{if } v_i\in C_u\\
        0,\,\,\text{otherwise}
    \end{cases},
\end{equation}
for each $i=1,2,\cdots,n$ and $u=1,2,\cdots,k$. 
It can also be rewritten as follows \citep{newman2006modularity}:
\begin{equation}
    Q=\frac{1}{2m}\Tr \left(S^TBS\right),
\end{equation}
where $B=A-\frac{1}{2m}KK^T$.

Clearly, the modularity metric \eqref{eq_modularity_matrix} measures the quality of a community partition by comparing the observed edge density within the identified communities to the expected density in a randomized network configuration model. A higher modularity score indicates that the identified communities are more distinct, exhibiting a significantly bigger number of internal connections than expected by chance.

{\bf Modularity of directed graph:} Modularity metric \eqref{eq_modularity_matrix} can be generalized in a directed graph setting \citep{directed_modularity}. Let $K^{in}=(k^{in}_1,k^{in}_2,\cdots,k^{in}_n)^T$ be the in-degree vector and $K^{out}=(k^{out}_1,k^{out}_2,\cdots,k^{out}_n)^T$ the out-degree vector of the directed graph $G_d=(V,E)$. The number of edges is then $m^{directed}=\sum_i{k^{in}_i}+\sum_i{k^{out}_i}$. The modularity $Q$ of a community partitioning to the directed network  $G_d$ can be defined by
\begin{equation} \label{eq_modularity_directed}
    Q=\frac{1}{m^{directed}}\sum_{ij}\sum_{u} \left[A_{ij}-\frac{k^{in}_ik^{out}_j}{m^{directed}}\right]S_{iu}S_{ju}.
\end{equation}

\section{Method}

In this section, we first introduce the \emph{protected group network} $G^P$ for the given graph $G$ and its associated fairness-modularity $Q^P$. We prove that a fair community partition can be achieved by minimizing $Q^P$. Therefore, the fair community partitioning of $G$ boils down to a multi-objective optimization problem of maximizing the modularity of $G$ while minimizing the fairness-modularity of its associated protected group network $G^P$. 
Moreover, we provide a novel efficient algorithm to solve the introduced optimization problem of fair community partitioning.


{\bf Protected Group Network:} Fairness can be defined from a network perspective \citep{multi_view}. Now we adapt this perspective and begin by formally defining \emph{protected group network}.

Given the graph $G=(V,E)$, let $\mathbb{P}=\{P_1,P_2,\cdots,P_r\}$ be $r$ protected groups, where each $P_w$, $w=1,2,\cdots,r$, is a vertex subset of $V$ sharing a distinct attribute, such as gender or age, and 
\[
V=\bigcup_{w=1}^r P_w, \,\,P_i\cap P_j=\emptyset \quad \forall i,j.
\]
We define a \emph{protected group network} $G^P=(V,E^{P})$ as the union of $r$ complete directed graphs $(P_w, E^{P_w})$, where the edge set $E^{P_w}=\{(v_i,v_j)| v_i\in P_w,v_j\in P_w\}$ including self-loops, such that
\[
E^P=\bigcup_{w=1}^r E^{P_w}, \,\,E^{P_i}\cap E^{P_j}=\emptyset \quad \forall i,j.
\]

The protected group network needs self-loops, meaning that the diagonal elements of the corresponding adjacency matrix are 1.
For convenience, by re-indexing vertices, the adjacency matrix of $G^P$ can be represented as:
\begin{equation} \label{eq_dmatrix}
    D=\diag \left\{\mathbbm{1}_{|P_1|},\mathbbm{1}_{|P_2|},\cdots,\mathbbm{1}_{|P_r|}\right\},
\end{equation}
where $|P_w|$ denotes the number of vertices in $P_w$ and $\mathbbm{1}_{|P_w|}$ is a $|P_w|\times |P_w|$ matrix whose entries are all $1$. Clearly, $D$ is a block diagonal matrix consistent of $\mathbbm{1}_{|P_w|}$.

\begin{remark}
    The adjacency matrix of a graph remains structurally unchanged under re-indexing, as the adjacency matrix $A'$ after re-indexing is equivalent to the original adjacency matrix $A$ up to a permutation matrix $P$ such that $A'=PAP^T$.
    The value of modularity $Q$ also remains unchanged after re-indexing as
    \[
    Q=\frac{1}{2m}\Tr \left[(PS)^T(PBP^T)(SP)\right] = \frac{1}{2m}\Tr \left(S^TBS\right).
    \]
\end{remark}
Figure \ref{fig_pgnet} depicts an example of a protected group network, with three protected groups of vertices (in three different colors) including self-loops. 
Figure \ref{fig_NFP} represents a fair community partitioning result, where members of each protected group are mostly distributed evenly across communities. Figure \ref{fig_UP} shows the community partition missing fairness, which may lead to the segregation or underrepresentation of certain groups.

\begin{figure}[ht]
    \begin{center}
    \subfloat[nearly fair partitioning]{\includegraphics[width=.34\textwidth]{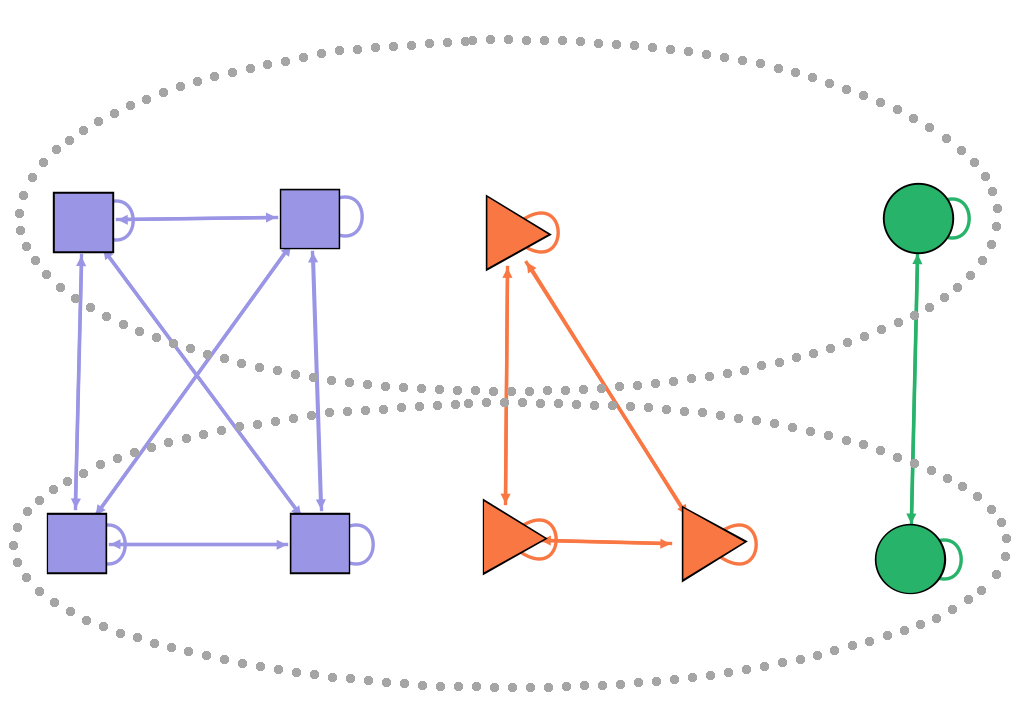}
    \label{fig_NFP}}
    \hspace{2cm}
    \subfloat[unfair partitioning]{\includegraphics[width=.34\textwidth]{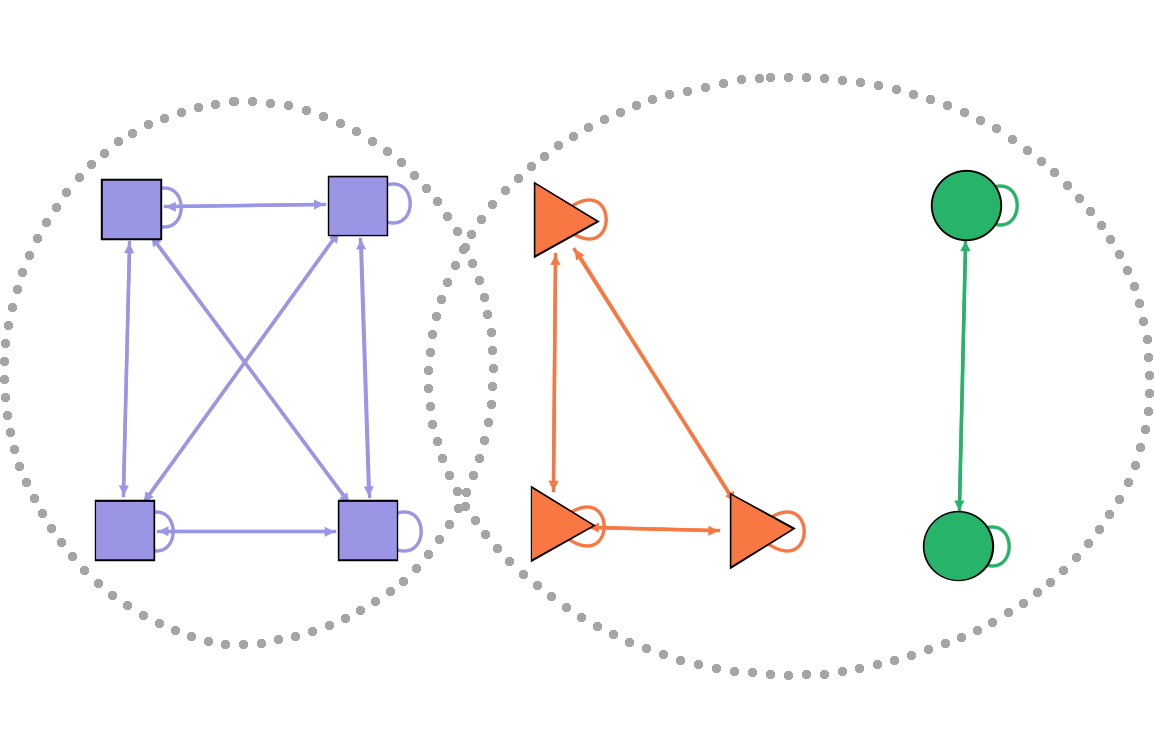}
    \label{fig_UP}}
    \caption{Partitioning examples of \emph{protected group network}. There are three protected groups, with the vertices in the same group represented by the same color. The vertices are divided into two communities represented by the dashed boxes.}
    \label{fig_pgnet}
    \end{center}
\end{figure}


{\bf Fairness-Modularity:} Actually, the protecctd group network $G^P = (V, E^P )$ is just a special case of directed graph $G_d$ with self-loops, such that $(v_j, v_i) \in E^P$ only when $(v_i, v_j) \in E^P$.
For such a directed graph $G_d$, we can easily prove the following result:

\begin{lemma}
    Let $G_d(V,E)$ be a directed graph such that $(v_j, v_i) \in E$ only when $(v_i, v_j) \in E$. And define $k_i=k^{in}_i=k^{out}_i$ for any node $v_i$ and $m=\sum_i{k^{in}_i}$. The modularity of its community partition $\mathbb{C} =\{C_1,C_2,\cdots,C_k\}$ can be calculated by
    \begin{equation}
        Q=\frac{1}{2m}\sum_{ij}\sum_{u} \left[A_{ij}-\frac{k_ik_j}{2m}\right]S_{iu}S_{ju},
    \end{equation}
    where $S$ is a community indicator matrix and $A$ is the adjacency matrix of $G_d$.
    \label{thm_corr}
\end{lemma}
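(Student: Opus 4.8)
The plan is to derive the claimed expression directly from the directed-graph modularity definition in~\eqref{eq_modularity_directed}, using only the symmetry hypothesis to collapse the directed quantities into their undirected counterparts. First I would unpack what the condition ``$(v_j,v_i)\in E$ only when $(v_i,v_j)\in E$'' says about the data entering that formula. Since this makes the edge set symmetric, the adjacency matrix satisfies $A_{ij}=A_{ji}$, and therefore the in- and out-degrees of every vertex coincide: $k^{in}_i=\sum_j A_{ji}=\sum_j A_{ij}=k^{out}_i$. This is precisely the content of the abbreviation $k_i:=k^{in}_i=k^{out}_i$ in the statement, so the common value $k_i$ is well defined, and the in/out labels attached to the factors in the null-model product become immaterial (whichever ordering the directed formula uses, the product equals $k_ik_j$).

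Next I would reconcile the two edge-count conventions, which is where the only genuine care is needed. The directed modularity in~\eqref{eq_modularity_directed} is normalized by $m^{directed}=\sum_i k^{in}_i+\sum_i k^{out}_i$, whereas the lemma sets $m=\sum_i k^{in}_i$. Because $k^{in}_i=k^{out}_i$ for every $i$, the two sums $\sum_i k^{in}_i$ and $\sum_i k^{out}_i$ are equal, each to $m$, so $m^{directed}=2m$. This single identity is the hinge of the argument: it converts the outer prefactor $1/m^{directed}$ into $1/(2m)$ and the null-model denominator $m^{directed}$ into $2m$ simultaneously.

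Finally I would substitute. Replacing $k^{in}_i$ by $k_i$, $k^{out}_j$ by $k_j$, and $m^{directed}$ by $2m$ in~\eqref{eq_modularity_directed} turns the null-model term $k^{in}_ik^{out}_j/m^{directed}$ into $k_ik_j/(2m)$ and the outer prefactor into $1/(2m)$, yielding exactly
\[
Q=\frac{1}{2m}\sum_{ij}\sum_{u}\left[A_{ij}-\frac{k_ik_j}{2m}\right]S_{iu}S_{ju},
\]
which is the asserted formula. I do not expect any real obstacle here; the proof is essentially a bookkeeping exercise, and the only point demanding vigilance is the factor-of-two matching between $m^{directed}=2m$ and the $1/(2m)$ normalization. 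It is precisely the symmetry $k^{in}_i=k^{out}_i$ that makes these factors line up, so the substantive takeaway of the lemma is that a symmetric directed graph reduces the directed modularity~\eqref{eq_modularity_directed} to the familiar undirected form~\eqref{eq_modularity_matrix}, which in turn licenses treating the protected group network $G^P$ with the undirected modularity machinery in the sequel.
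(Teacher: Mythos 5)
Your proposal is correct and follows essentially the same route as the paper's own proof: both hinge on the single identity $m^{directed}=\sum_i k^{in}_i+\sum_i k^{out}_i=2m$ (a consequence of $k^{in}_i=k^{out}_i$ under the symmetry hypothesis) and then substitute into the directed modularity definition \eqref{eq_modularity_directed}. Your version is slightly more explicit in deriving $A_{ij}=A_{ji}$ and hence $k^{in}_i=k^{out}_i$ from the edge-symmetry condition, a step the paper leaves implicit, but the argument is otherwise identical.
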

\textbf{Proof:} See Appendix \ref{appendix_lemma_proof}.

By the above lemma, we see that an equivalent undirected graph can be constructed for the special directed graph $G_d=(V,E)$ with $(v_j, v_i) \in E$ only when $(v_i, v_j) \in E$, simply by combining the two directed edges $(v_i, v_j)$ and $(v_j, v_i)$ into one undirected edge for $i\neq j$ as well as the self-loop to the "0.5" number of an edge when considering degrees and the number of edges.

By Lemma \ref{thm_corr}, the modularity $Q^P$ of the protected group network $G^P=(V,E^P)$, for the community indicator matrix $S$ of $G^P$ which indicates $k$ communities $\mathbb{C}=\{C_1,C_2,\cdots,C_k\}$, can be calculated as follows:
\begin{equation}
        Q^P=\frac{1}{2m^P}\Tr\left(S^TB^PS\right)=\frac{1}{2m^P}\left[\Tr\left(S^TDS\right)-\frac{1}{2m^P}\Tr\left(S^TK^P(K^P)^TS\right)\right]
        \label{eq_QP},
\end{equation}
where $D$ is the adjacency matrix of $G^P$ given by \eqref{eq_dmatrix}, $B^P=D-\frac{1}{2m^P}K^P(K^P)^T$ and $K^P$ denotes the vector of each vertex degree (in-degree vector or out-degree vector) such that
\[
K^P = \big\{ \underbrace{|P_1|,\cdots,|P_1|}_{|P_1|\text{ elements}},\cdots,\underbrace{|P_r|,\cdots,|P_r|}_{|P_r|\text{ elements}}\big\} .
\]
$m^P=\frac12\sum_{w=1}^r|P_w|^2$ is hence the number of edges in the corresponding undirected graph of $G^P$.

In this work, we call $Q^P$ {\em fairness-modularity}. From the equation \eqref{eq_QP} of $Q^P$, the key result of fair community partitioning can be obtained:
\begin{theorem} \label{thm-fm}
    The modularity $Q^P$ of the community partition $\mathbb{C}=\{C_1,C_2,\cdots,C_k\}$ in the protected group network $G^P=(V,E^P)$ verifies $Q^P\ge 0$, and the equality $Q^P = 0$ holds if and only if the community partition $\mathbb{C}$ is fair, i.e.
    \begin{equation}
        \frac{|C_u\cap P_w|}{|C_u|}=\frac{|P_w|}{n}
    \end{equation}
    for any $u = 1,\cdots,k$ and $w=1,\cdots,r$.
\end{theorem}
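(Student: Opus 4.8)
The plan is to expand the two traces appearing in the definition \eqref{eq_QP} of $Q^P$ explicitly in terms of the intersection sizes $n_{uw} := |C_u \cap P_w|$, and then to recognize the resulting scalar expression as a sum of Cauchy--Schwarz gaps, one per community. First I would compute $\Tr(S^T D S)$: since $D$ is block diagonal with $D_{ij}=1$ exactly when $v_i,v_j$ lie in the same protected group, the double sum over $i,j$ collapses and
\[
\Tr(S^T D S) = \sum_u \sum_{w} \Big(\sum_{i \in P_w} S_{iu}\Big)^{2} = \sum_u \sum_w n_{uw}^{2}.
\]
Using $K^P_i = |P_w|$ for $v_i \in P_w$, the second trace is
\[
\Tr\big(S^T K^P (K^P)^T S\big) = \sum_u \Big(\sum_i K^P_i S_{iu}\Big)^{2} = \sum_u \Big(\sum_w |P_w|\, n_{uw}\Big)^{2}.
\]

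Next I would package these quantities per community. Writing $p_w = |P_w|$, and letting $\mathbf{x}_u = (n_{u1},\ldots,n_{ur})$ and $\mathbf{p} = (p_1,\ldots,p_r)$, I observe that $2m^P = \sum_w p_w^2 = \|\mathbf{p}\|^{2}$. Substituting the two expansions into \eqref{eq_QP} gives the compact form
\[
Q^P = \frac{1}{2m^P}\sum_u \Big( \|\mathbf{x}_u\|^{2} - \frac{\langle \mathbf{p}, \mathbf{x}_u \rangle^{2}}{\|\mathbf{p}\|^{2}} \Big).
\]
The key step is then immediate: since every $P_w$ is nonempty we have $\mathbf{p}\neq 0$, and the Cauchy--Schwarz inequality $\langle \mathbf{p}, \mathbf{x}_u\rangle^{2} \le \|\mathbf{p}\|^{2}\,\|\mathbf{x}_u\|^{2}$ shows each summand is nonnegative, whence $Q^P \ge 0$.

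For the equality characterization, $Q^P = 0$ forces equality in Cauchy--Schwarz for every community $u$, i.e.\ $\mathbf{x}_u$ is parallel to $\mathbf{p}$: there is $\lambda_u \ge 0$ with $n_{uw} = \lambda_u p_w$ for all $w$. Summing over $w$ and using $\sum_w n_{uw} = |C_u|$ together with $\sum_w p_w = n$ pins down $\lambda_u = |C_u|/n$, which rearranges to the fairness identity $|C_u \cap P_w|/|C_u| = |P_w|/n$. Conversely, that identity yields $\mathbf{x}_u = (|C_u|/n)\,\mathbf{p}$, making every summand vanish, so $Q^P = 0$; this gives both directions. The only mildly delicate part is the index bookkeeping in the two trace expansions — confirming that the block structure of $D$ genuinely collapses the sum over $i,j$ into the square of a single community--group count — after which Cauchy--Schwarz and its sharp equality condition carry the entire argument.
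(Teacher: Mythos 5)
Your proposal is correct and follows essentially the same route as the paper's own proof: expanding both traces into the intersection counts $|C_u\cap P_w|$, identifying $2m^P=\sum_w|P_w|^2$, and applying the Cauchy--Schwarz inequality per community, with its equality condition pinning down $|C_u\cap P_w|/|C_u|=|P_w|/n$. Your vectorized packaging and explicit treatment of the converse direction are slightly cleaner than the paper's presentation, but the argument is the same.
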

\textbf{Proof:} See Appendix \ref{appendix_theorem_proof}.

Theorem \ref{thm-fm} indicates that a community partition that minimizes {\em fairness-modularity} $Q^P$ is fair. 
Therefore, finding a fair community partition of the given $G$ can be formulated as a multi-objective
optimization problem of maximizing the modularity of $G$
while minimizing the fairness-modularity of its associate
protected group network $G^P$, i.e.
\begin{equation} \label{multi-opt}
\max_{S} \;\; \{ Q, \, - Q^P \}
\end{equation}
where, the modularity $Q$ of $G$ for the community indicator
matrix $S$ is given in \eqref{eq_modularity_matrix} and the fairness-modularity $Q^P$ of $G^P$ is given in \eqref{eq_QP}. For convenience, we call $G$ the observed network.

The properties of $Q^P$ are given in Appendix \ref{appendix_property}.

{\bf General Framework and Fair Fast Newman Algorithm:} To address the new modularity-based fair partitioning problem \eqref{multi-opt}, we present a simple, general framework that demonstrates how to apply $Q^P$ to fair community partitioning. This approach requires only minimal additional judgment compared to traditional modularity optimization. The specific steps are as follows:

\begin{enumerate}[I.]
    \item \textbf{Initialization:} Initialize communities $\mathbb{C}$ which may come from random assignment, a community with a single vertex, or other rules.
    \item \textbf{Increment update:} Merge or move some vertices or communities and find a strategy that achieves the best increment of the modularity after merging or moving them.\label{em_general2}
    \item \textbf{Convergence judgment:} Repeat Step \ref{em_general2} until the increment of the modularity is nonpositive or satisfies other conditions.
\end{enumerate}
To consider fairness in the modularity optimization, we add an extra judgment in the Increment update stage. The algorithm only chooses the strategy that keeps the increment of the fairness-modularity negative after modification. Convergence judgment can be slightly loosened to get better fairness. Unlike previous fairness learning methods, our new general framework is based on modularity optimization. An obvious advantage is that this framework is applicable to most popular modularity optimization algorithms, such as the Fast Newman algorithm, the Louvain algorithm, etc.

Under the general modularity optimization framework, we propose a novel algorithm according to the Fast Newman (FN) algorithm, the so-called Fair Fast Newman (FairFN) algorithm, to tackle the introduced multi-objective optimization problem \eqref{multi-opt} of seeking a fair community partition of the given graph $G$. 

\begin{algorithm}[tb]
    \caption{Fair Fast Newman Algorithm \protect\footnotemark}
    \label{alg_fair_fn}
    \begin{algorithmic}[1]
    \Require Degree vector $K$ and edges number $m$ of $G$, $K^P$ and $m^P$ defined in \eqref{eq_QP}, parameter $\alpha$.
    \Ensure The partition $\mathbb{C}=\{C_1,C_2,\cdots,C_k\}$.
    \State Initialize $n$ communities $\mathbb{C}=\{C_1,C_2,\cdots,C_n\}$, each containing one vertex.
    \For{each pair of vertices $v_i$ and $v_j$}
        \IfThenElse{$v_i$ and $v_j$ are linked in $G$}{$e_{ij}\gets\frac{1}{2m}$}{$e_{ij}\gets0$}
        \IfThenElse{$v_i$ and $v_j$ are linked in $G^P$}{$e^P_{ij}\gets\frac{1}{2m^P}$}{$e^P_{ij}\gets0$}
        \State $a_i \gets \frac{K_i}{2m},a^P_i \gets \frac{K^P_i}{2m^P}$
    \EndFor
    \While{not all communities are merged into one and $\Delta Q_{ij} > -\frac{\alpha}{2m}$ for any pair $C_i$ and $C_j$}
        \State Compute $\Delta Q_{ij} = 2(e_{ij} - a_i a_j)$ and $\Delta Q^P_{ij} = 2(e^P_{ij} - a^P_i a^P_j)$ for each pair of communities $C_i$ and $C_j$.
        \State Find the pair of communities $C_i$ and $C_j$ with the maximum $\Delta Q_{ij}$ which satisfies $\Delta Q^P_{ij}<0$.
        \State Merge communities $C_i$ and $C_j$: $C_i=C_i\cap C_j$, and remove $C_j$.
        \State Update the values for each $k\neq i$:
        \begin{gather*}
            e_{ik} \gets e_{ik} + e_{jk}, \quad a_{i} \gets a_i + a_j\\
            e^P_{ik} \gets e^P_{ik} + e^P_{jk}, \quad a^P_{i} \gets a^P_i + a^P_j
        \end{gather*}
    \EndWhile
    \end{algorithmic}
\end{algorithm}
\footnotetext{The code will be provided in GitHub after review.}

The pseudo-code of the proposed FairFN algorithm is given in Algorithm \ref{alg_fair_fn}, which only adds an extra step to optimize the fairness-modularity. Clearly, the FairFN algorithm is initialized by $n$ communities, each of which contains one single vertex; and two communities are chosen to merge in each iteration so that the modularity of $G$ is increased incrementally while the fairness-modularity of $G^P$ is monotonously reduced.
As shown in Algorithm \ref{alg_fair_fn}, $\Delta Q_{ij}$ can be rewritten as a matrix $\Delta Q=\left(\Delta Q_{ij}\right)_{n\times n}$
\begin{equation}
    \Delta Q=2(e-aa^T),
\end{equation}
where $e=(e_{ij})$ and $a=(a_i)$ is a column vector. Updating $e$ and $a$ involves only matrix summation. And a similar operation can also be performed on $\Delta Q^P_{ij}$. 
Therefore, the steps of Algorithm \ref{alg_fair_fn} can be implemented in parallel to achieve a significant speed-up. 
By simple computation, we see that the computational complexity of the proposed FairFN algorithm is $O((m+n)n)$.

In practice, a relaxation on the stopping condition is applied to avoid unbalanced results, such that
\begin{equation}
    \Delta Q_{ij}\le -\frac{\alpha}{2m},
\end{equation}
where the introduced hyperparameter $\alpha \ge 0$. Here, $\alpha$ has an obvious practical implication. When $\alpha+\beta_{ij}$ ($\beta_{ij}$ represents the number of edges between communities $i$ and $j$) falls below the expected value $\frac{a_i a_j}{2m}$ in the random network, the algorithm stops. For analysis of $\alpha$, see Appendix \ref{appendix_threshold}.

\section{Experiments}
\label{section_experiments}

In this section, experiments over the LFR benchmark are applied to validate the effect of the proposed FairFN algorithm by comparisons with the FN algorithm. Extensive experiments are conducted on both synthetic and real-world datasets to show that the FairFN algorithm significantly outperforms other state-of-the-art fair partitioning algorithms.

\subsection{Datasets and Validation Measures}
The following datasets are employed for experiments in this work (details are given in Appendix \ref{appendix_dataset}):

\textbf{LFR benchmark:} the Lancichinetti-Fortunato-Radicchi (LFR) benchmark is a widely used synthetic graph generation model designed to evaluate community detection algorithms. We apply it to generate graphs that exhibit realistic and controllable characteristics, such as power-law degree distributions and community structures with varying sizes communities.

\textbf{Synthetic clustering datasets:} two datasets are generated by Scikit-learn \citep{sklearn}, which has $3000$ samples and $2$ attributes.

\textbf{Real-world datasets:} the datasets of Adult, Bank, Census1990, Creditcard and Diabetic are used in our experiments, which can be found at the UCI Machine Learning Repository\footnote{\url{https://archive.ics.uci.edu/datasets}}.

\subsection{Validation Parameters}
In addition, four measures of fairness ratio (FR) introduced by \citet{FairSC_fair_algorithm} and \citet{Bera}, the average Wasserstein distance between probability vectors (AWD) proposed in \citet{wang2019fairdeepclusteringmultistate}, modularity, and fairness-modularity are used to evaluate the performance of the applied algorithms in the experiments. 

The Fairness ratio (FR) of the community $C_u$ is defined by
\begin{equation}
    FR(C_u)=\min_{w}\left[\frac{r(u,w)}{r(w)},\frac{r(w)}{r(u,w)}\right],
\end{equation}
where $r(u,w)=\frac{|C_u\cap P_w|}{|C_u|}$ and $r(w)=\frac{|P_w|}{n}$. FR of the community partition $\mathbb{C}=\{C_1,C_2,\cdots,C_k\}$ is defined by
\begin{equation}
    FR(C)=\min_{C_u \in \mathbb{C}}\, FR(C_u).
\end{equation}

Clearly, $FR(C) \leq 1$ and a larger value of FR is better.

The average Wasserstein distance between probability vectors (AWD) is defined as
\begin{equation}
    AWD=\frac{\sum_{u}|C_u|\times WD(p_u,p)}{n},
\end{equation}
where $p_u$ is the distribution of the protected group in $C_u$ and $p$ is the distribution of the protected group in $V$.
A smaller value of AWD is better.

\subsection{Experiments over LFR Networks}

In this experiment, we present the performance differences between FN and FairFN over the LFR benchmark networks, which highlights the effect of FairFN and also validates our proposed simple but powerful general framework.

To have intuitive recognition for the distinction between FN and FairFN, we draw communities that are detected by FN and FairFN and protected groups in an LFR network with 200 nodes. Figure \ref{fig_lfr_fn_net} and Figure \ref{fig_lfr_fair_fn_net} show that FairFN still has a fine effect of community partitioning. Only a few vertices are improperly classified due to the trade-off between modularity and fairness. Figure \ref{fig_lfr_fn_pgnet} and Figure \ref{fig_lfr_fair_fn_pgnet} show the fairness of each community. At the bottom and upper right of clusters in Figure \ref{fig_lfr_fn_pgnet}, the red nodes are dominant, i.e., the proportion of the protected groups deviates greatly from $1:1$, which is an unfair community partition. And at each cluster in Figure \ref{fig_lfr_fair_fn_pgnet}, the proportion of red and green nodes is $1:1$, meaning a fair community partition.

\begin{figure}[ht]
    \begin{center}
    \subfloat[communities of FN]{\includegraphics[width=.20\textwidth]{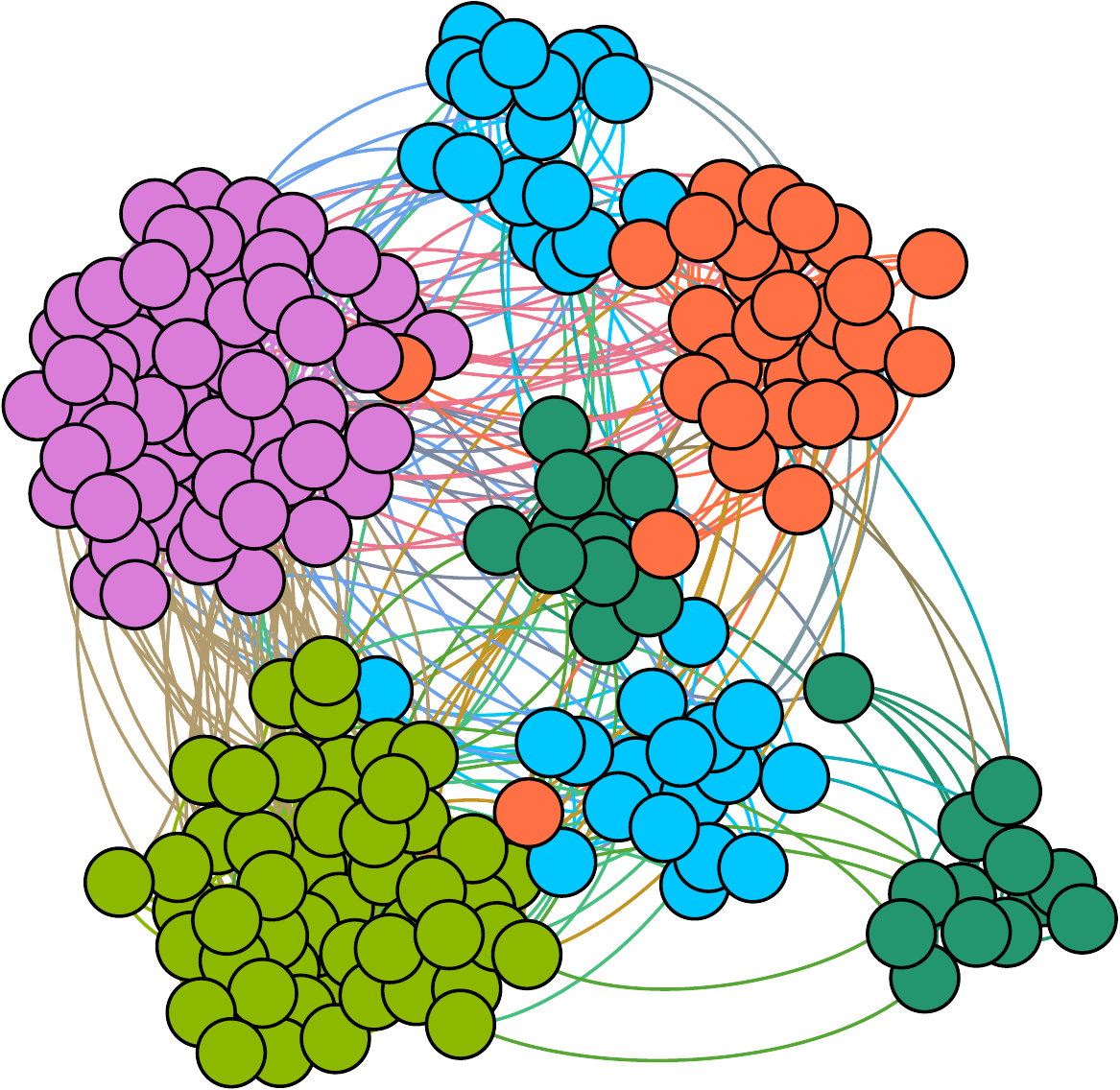}
    \label{fig_lfr_fn_net}}
    \hfill
    \subfloat[communities of FairFN]{\includegraphics[width=.20\textwidth]{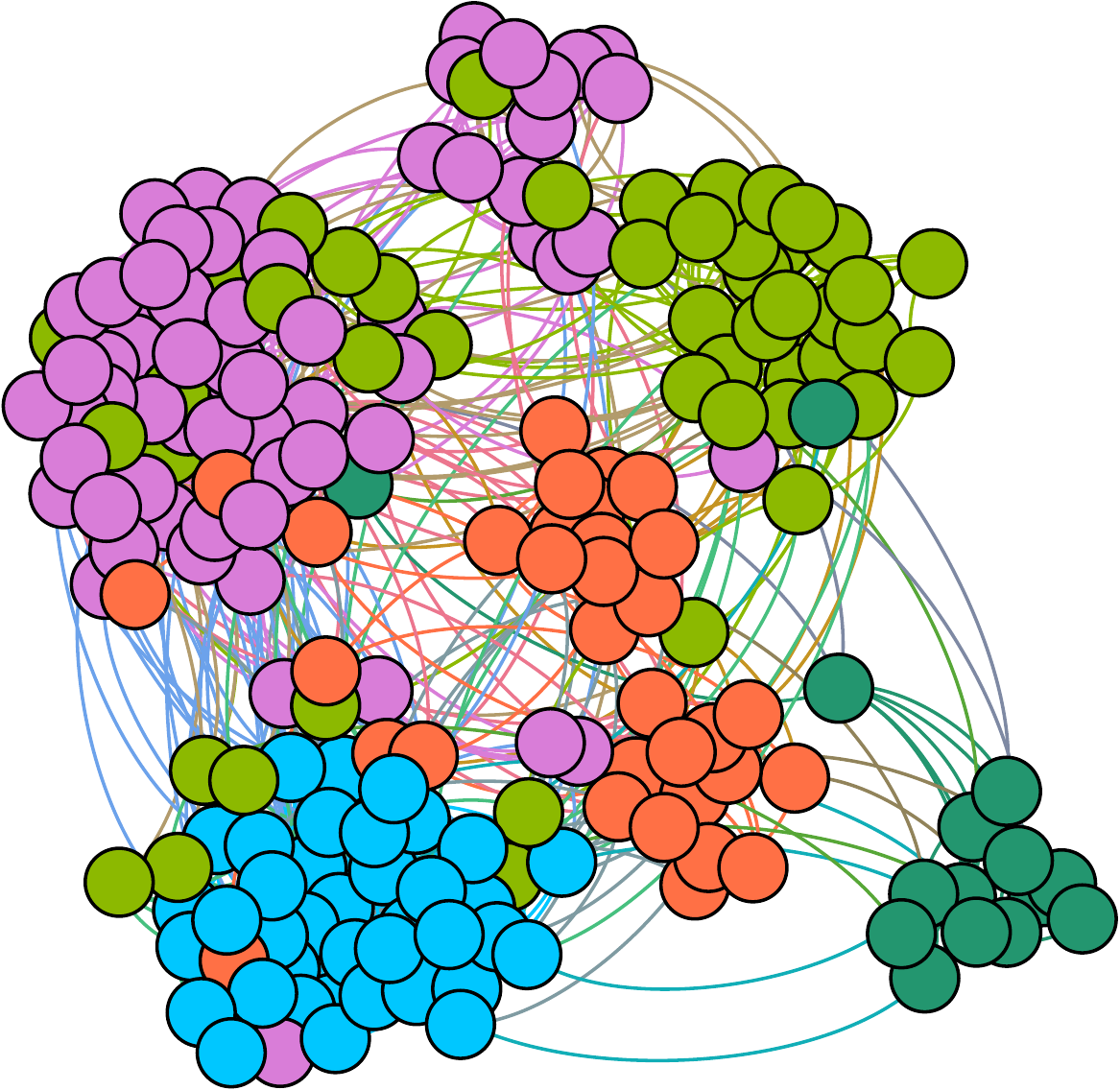}
    \label{fig_lfr_fair_fn_net}}
    \hfill
    \subfloat[protected groups of FN]{\includegraphics[width=.20\textwidth]{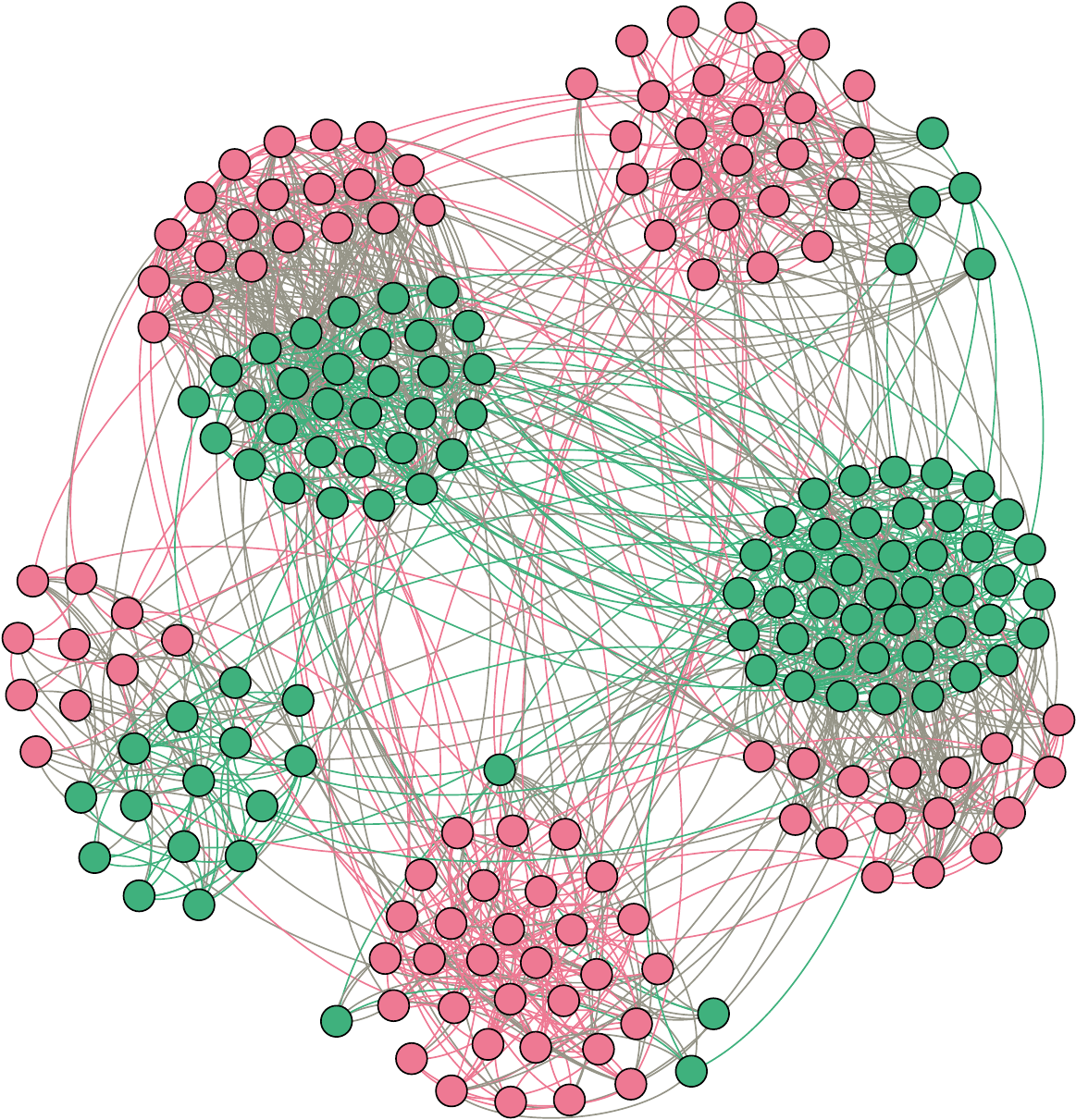}
    \label{fig_lfr_fn_pgnet}}
    \hfill
    \subfloat[protected groups of FairFN]{\includegraphics[width=.20\textwidth]{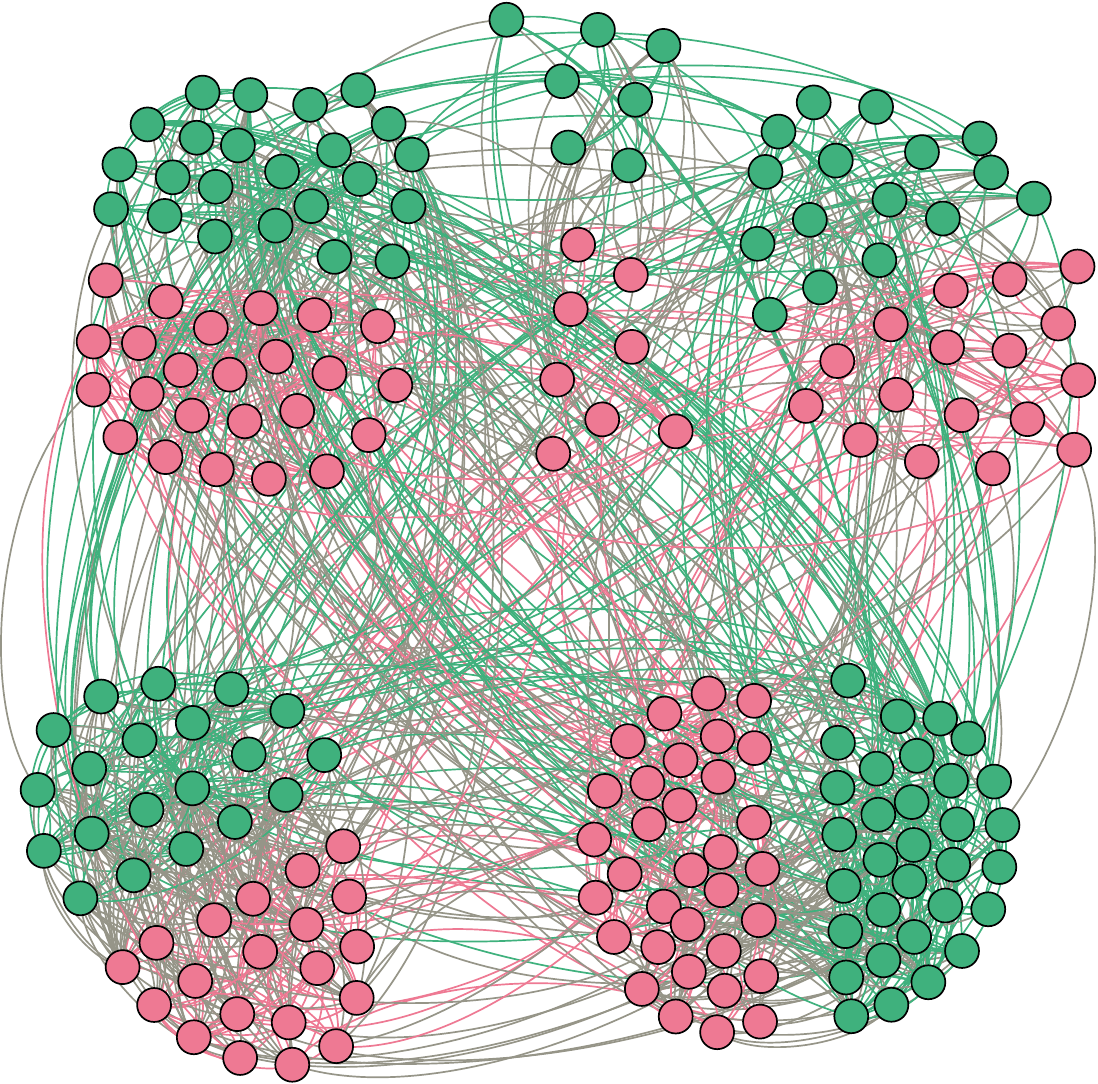}
    \label{fig_lfr_fair_fn_pgnet}}
    \caption{Communities detected and protected groups of each community partition in the LFR network by different algorithms. The network is divided into five communities. Figure \ref{fig_lfr_fn_net} and \ref{fig_lfr_fair_fn_net} show the community partitioning results in different colors. Figure \ref{fig_lfr_fn_pgnet} and \ref{fig_lfr_fair_fn_pgnet} show two protected groups of each community detected by FN or FairFN, and each cluster is a community. Clearly, the proposed FairFN performs much better to achieve community fairness. }
    \label{fig_lfr_net}
    \end{center}
\end{figure}

\begin{figure}[t]
    \begin{center}
    \subfloat[FR]{\includegraphics[width=.3\textwidth]{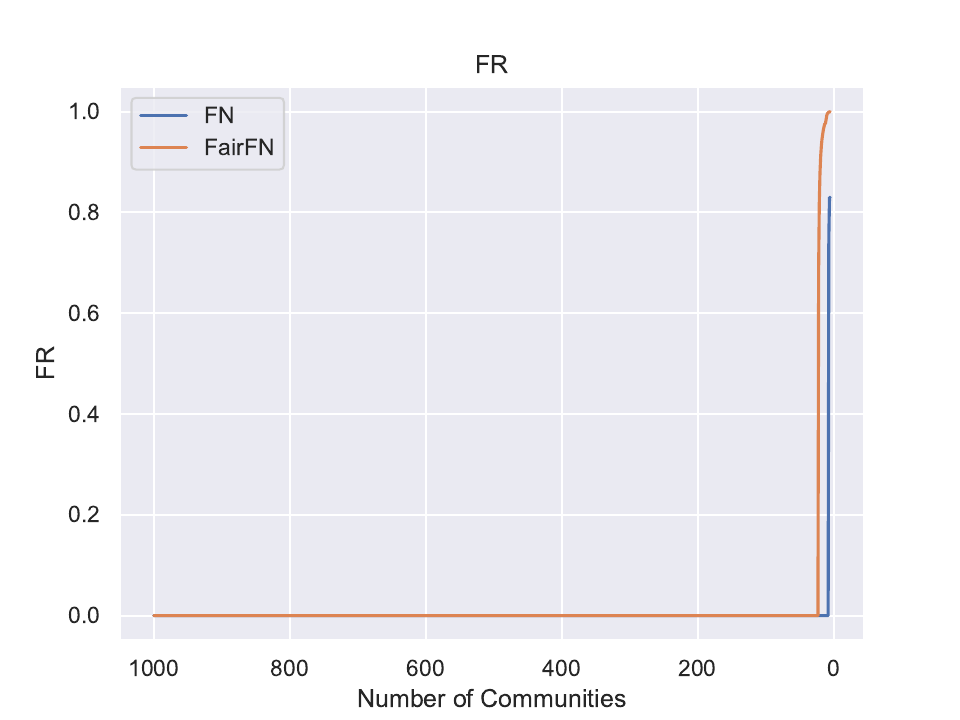}
    \label{fig_lfr_fr}}
    \hfill
    \subfloat[AWD]{\includegraphics[width=.3\textwidth]{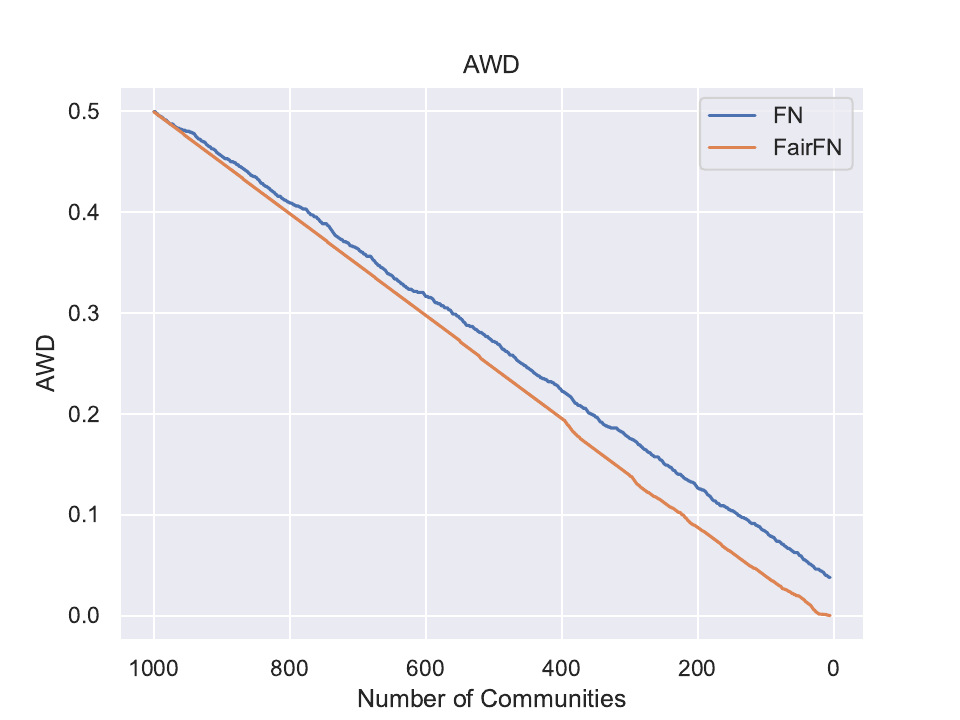}
    \label{fig_lfr_awd}}
    \hfill
    \subfloat[Modularity]{\includegraphics[width=.3\textwidth]{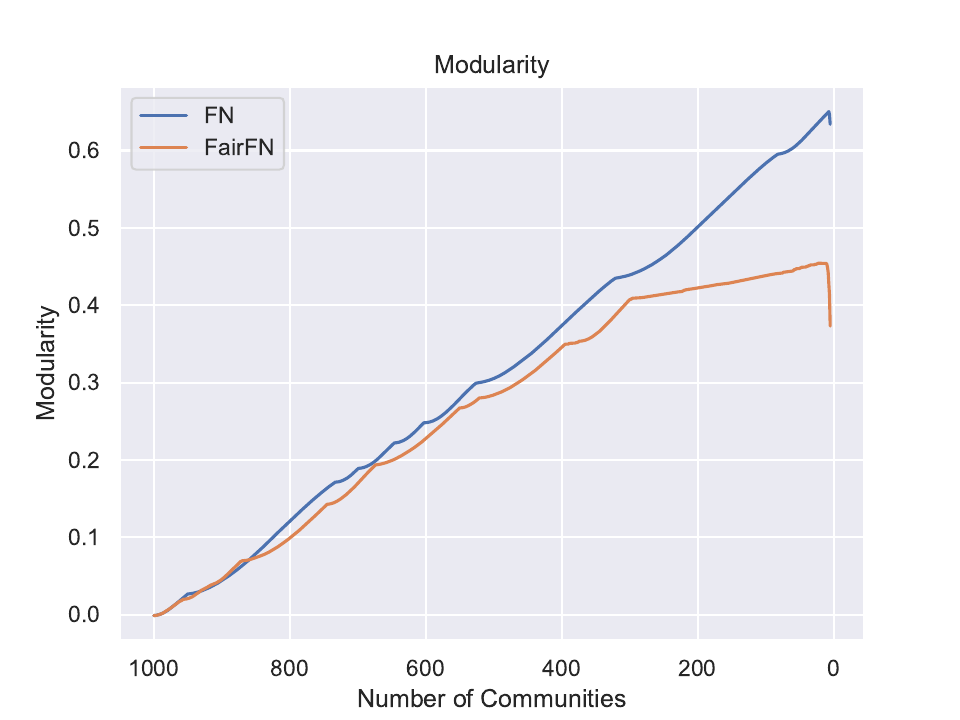}
    \label{fig_lfr_modularity}}
    \quad
    \subfloat[Fairness-Modularity]{\includegraphics[width=.3\textwidth]{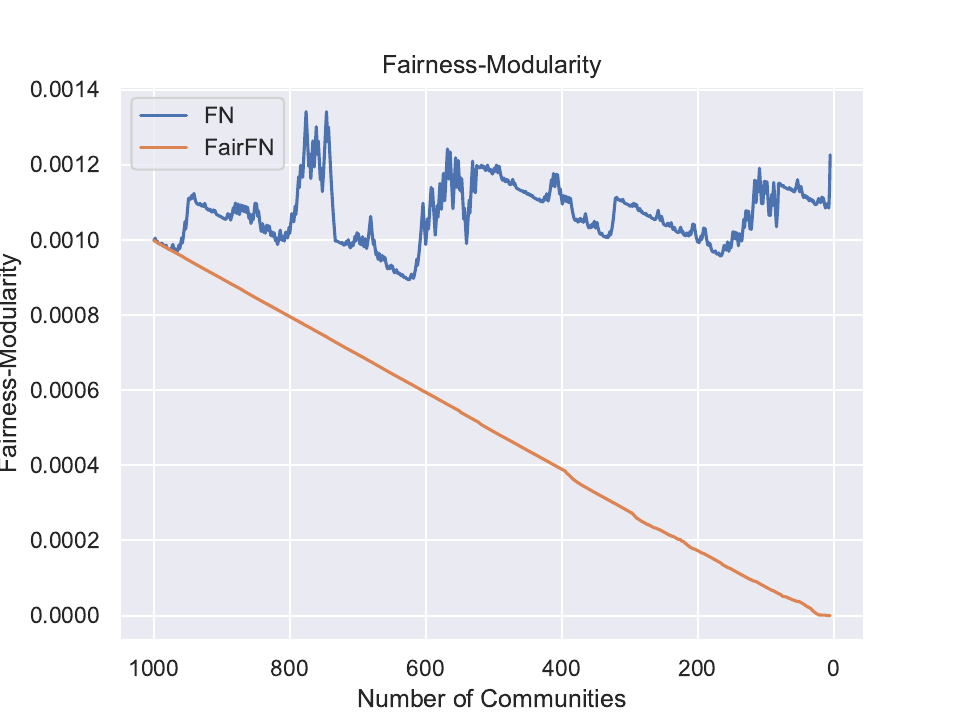}
    \label{fig_lfr_fmodularity}}
    \hfill
    \subfloat[NMI]{\includegraphics[width=.3\textwidth]{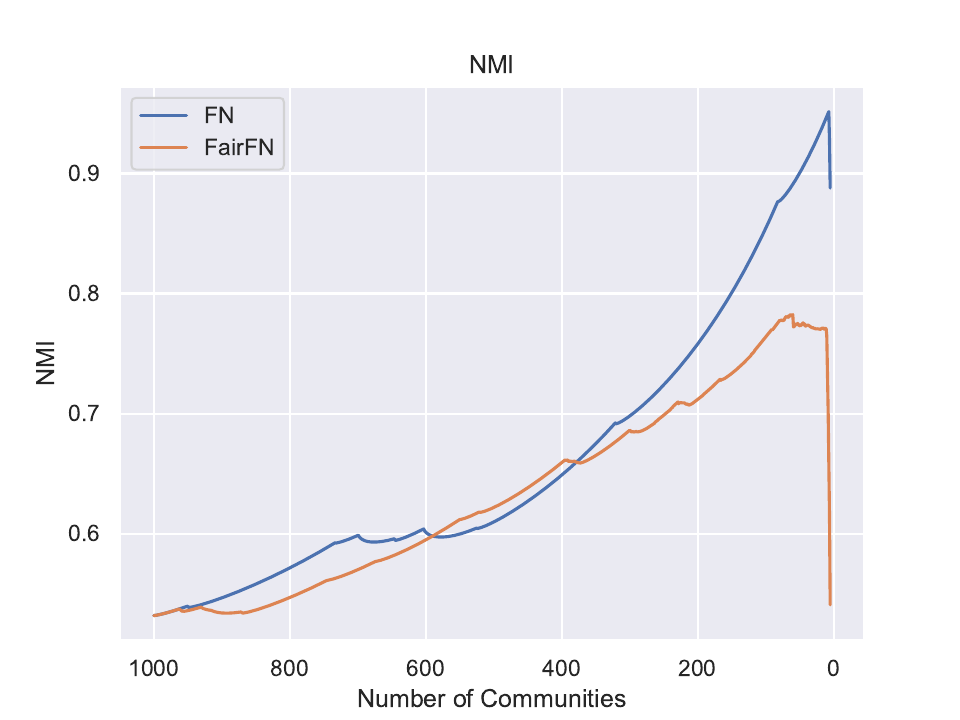} \label{fig_lfr_nmi}}
    \hfill
    \subfloat[thresholds of $\alpha$]{\includegraphics[width=.3\textwidth]{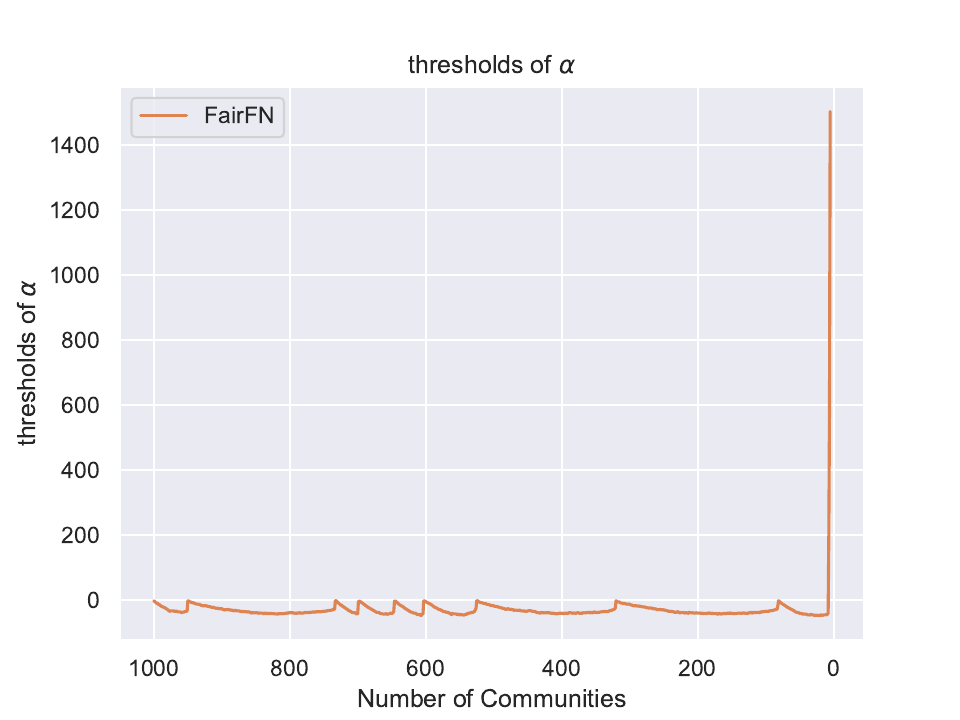} \label{fig_lfr_alpha}}
    \caption{Experiment results in the given LFR network. The number of communities ranges from 5 to 1000 which is decreasing due to the merging process. The thresholds of $\alpha$ are calculated as $2m\Delta Q_{max}$, and $\alpha$ should exceed the thresholds if one wants to allow community merging to continue. The results show that FairFN obtained better fairness with a low impact on modularity.}
    \label{fig_lfr_metric}
    \end{center}
\end{figure}

The changes in the metrics during iteration are shown in Figure \ref{fig_lfr_metric}. From the results, it can easily be seen that FairFN achieves better fairness with a low impact on the effect of community partitioning. What should be noticed is that there is a sharp increase in Figure \ref{fig_lfr_fr}, with the merging process of the communities. This is because the FR is a sensitive metric for the missing protected group in a community. With a greedy strategy, the unfair community will not be considered until all mergers are done that bring greater modularity. Therefore, the value of $\alpha$ must exceed a certain threshold to prevent the disregard of unfair communities. On the other hand, AWD is a more stable metric, and it decreases during merging, indicating that fairness improves steadily throughout the process.

In terms of modularity, as shown in Figure \ref{fig_lfr_modularity}, FairFN and FN exhibit similar growth at the beginning of the merging process, suggesting that the conflict between concentration and fairness is inapparent at the initial stage. The conflict between the two becomes noticeable only toward the end. Additionally, we observe a sharp decrease in modularity at the end of the merging, which is a characteristic of the Fast Newman algorithm. It stems from the growing dominance of the null model’s expected edge term over observed connectivity.
To find a balance between concentration and fairness, we adaptively increase the parameter $\alpha$, which controls the threshold of mergers. Crucially, in late-stage merging, the threshold of $\alpha$ rises rapidly, creating a widening tolerance range for parameter adjustments (Figure \ref{fig_lfr_alpha}). This expanded tolerance correlates with the reduced number of remaining communities: fewer communities inherently allow larger $\alpha$ increments without risking over-merging. The order-of-magnitude differences in adjustable range of $\alpha$ during this phase ensure robustness—by the time communities become sparse, the algorithm prioritizes stability over aggressive merging, effectively mitigating over merging. More results about the threshold of $\alpha$ are given in Appendix \ref{appendix_threshold}.

The results in Figure \ref{fig_lfr_fmodularity} indicate that, compared with the FN algorithm, FairFN can indeed achieve an effective reduction of fairness-modularity.

As for the real communities detection, shown in Figure \ref{fig_lfr_nmi}, the performance of the FairFN algorithm is not as good as that of the FN algorithm. But this is understandable because real communities are often unfairly partitioned. The still high NMI of FairFN algorithm indicates that the communities detected by FairFN don't deviate too much from the original communities, even when fairness is taken into account. Therefore, through a few adjustments, we can transform unfair real communities into fair ones, which holds significant practical value.

\subsection{Comparisons with Baseline Algorithms}
For more details, we compare FairFN with state-of-the-art fair clustering algorithms, including NCut in variational fair clustering (VFC) \citep{VFC_fair_algorithm}, fair
normalized spectral clustering (FSCN) \citep{pmlr-v97-kleindessner19b}, balanced fair k-means clustering (BFKM) \citep{bfkm_algorithm}, and fair clustering ensemble (FCE) \citep{fce_algorithm}. The results of the FN algorithm are also included for comparison. The parameters of algorithms for comparison are listed in Table \ref{table_parameters} (see Appendix \ref{appendix_parameter}).

\begin{table}[!t]
    \caption{Fairness metrics (FR and AWD) for the results in the comparison among algorithms. The upward arrow on the right of the metrics means that the higher the value, the better, and the downward arrow does the opposite. The bold entities indicate the best results.}
    \label{table_fairness}
    \begin{center}
    \begin{tiny}
    \begin{tabular}{ccccccccccccc}
    \toprule
    \multirow{2}{*}{Datasets} & \multicolumn{6}{c}{FR $\uparrow$} & \multicolumn{6}{c}{AWD $\downarrow$}\\
    \cmidrule(r){2-7} \cmidrule(l){8-13} & FN & VFC & FSCN & BFKM & FCE & {\bf FairFN} & FN & VFC & FSCN & BFKM & FCE & {\bf FairFN} \\
    \midrule
    Synthetic\_equal & 0.7606 & 0.8670 & 0.9144 & 0.9667 & 0.9737 & \textbf{0.9855} & 0.0335 & 0.0140 & 0.0097 & 0.0097 & 0.0097 & \textbf{0.0020} \\
    Synthetic\_unequal & 0.0000 & 0.5120 & 0.6984 & 0.6580 & 0.6984 & \textbf{0.7383} & 0.0396 & 0.0208 & 0.0138 & 0.0110 & 0.0125 & \textbf{0.0046} \\     
    Adult & 0.5581 & 0.4381 & 0.7502 & 0.4842 & 0.0000 & \textbf{0.9993} & 0.0963 & 0.0613 & 0.0154 & 0.0567 & 0.0075 & \textbf{0.0001} \\
    Bank & 0.2480 & 0.2722 & 0.7255 & 0.3808 & 0.7653 & \textbf{0.9613} & 0.0776 & 0.1663 & 0.0611 & 0.0844 & 0.0285 & \textbf{0.0025} \\
    Census & 0.0000 & 0.0000 & 0.7697 & 0.6555 & 0.2048 & \textbf{0.9951} & 0.3279 & 0.3098 & 0.0560 & 0.0696 & 0.0894 & \textbf{0.0003} \\
    Creditcard & 0.5229 & 0.7300 & 0.9021 & 0.8621 & 0.8023 & \textbf{0.9836} & 0.0476 & 0.0480 & 0.0234 & 0.0262 & 0.0477 & \textbf{0.0004} \\
    Diabetic & 0.7187 & 0.8665 & 0.8624 & 0.9020 & 0.9473 & \textbf{0.9882} & 0.0256 & 0.0258 & 0.0135 & 0.0159 & 0.0092 & \textbf{0.0006} \\
    \bottomrule
    \end{tabular}
    \end{tiny}
    \end{center}
\end{table}

\begin{table}[!t]
    \caption{Modularity and fairness-modularity for the results in the comparison among algorithms. The legend has the same meaning as the one in Table \ref{table_fairness}. Due to the small value of fairness-modularity, the results of fairness-modularity have been multiplied by a constant $10^2$.}
    \label{table_modularity}
    \begin{center}
    \begin{tiny}
    \begin{tabular}{ccccccccccccc}
    \toprule
    \multirow{2}{*}{Datasets} & \multicolumn{6}{c}{$Q\uparrow$} & \multicolumn{6}{c}{$Q^P\times10^2\downarrow$}\\
    \cmidrule(r){2-7} \cmidrule(l){8-13} & FN & VFC & FSCN & BFKM & FCE & {\bf FairFN} & FN & VFC & FSCN & BFKM & FCE & {\bf FairFN} \\
    \midrule
    Synthetic\_equal & 0.9262 & 0.8238 & 0.7646 & 0.8123 & 0.7725 & \textbf{0.9304} & 0.0318 & 0.0136 & 0.0094 & 0.0068 & 0.0100 & \textbf{0.0001} \\       
    Synthetic\_unequal & \textbf{0.9262} & 0.8136 & 0.7646 & 0.7977 & 0.7667 & 0.9036 & 0.0274 & 0.0161 & 0.0144 & 0.0043 & 0.0103 & \textbf{0.0002} \\     
    Adult & 0.7600 & 0.1207 & 0.1226 & 0.1254 & 0.5804 & \textbf{0.7624} & 0.9441 & 0.2748 & 0.0242 & 0.2251 & 0.0045 & \textbf{0.0000} \\
    Bank & 0.8167 & 0.5351 & 0.4327 & 0.5493 & 0.7404 & \textbf{0.8185} & 0.2036 & 0.9088 & 0.3009 & 0.6919 & 0.0752 & \textbf{0.0001} \\
    Census & \textbf{0.8466} & 0.3129 & 0.4112 & 0.4538 & 0.4624 & 0.8310 & 5.7567 & 3.7494 & 0.5592 & 0.6714 & 1.1200 & \textbf{0.0000} \\
    Creditcard & 0.7178 & 0.4359 & 0.4330 & 0.4519 & 0.6725 & \textbf{0.7222} & 0.2279 & 0.1524 & 0.0502 & 0.0962 & 0.2651 & \textbf{0.0000} \\
    Diabetic & 0.7060 & 0.5434 & 0.2955 & 0.5409 & 0.7119 & \textbf{0.7729} & 0.0930 & 0.0454 & 0.0247 & 0.0318 & 0.0107 & \textbf{0.0000} \\
    \bottomrule
    \end{tabular}
    \end{tiny}
    \end{center}
\end{table}

The comparison results are shown in Table \ref{table_fairness} and Table \ref{table_modularity}. Obviously, FairFN achieves the best fairness and modularity compared to other fair clustering algorithms. It performs well in synthetic clustering datasets and real-world datasets. One interesting finding shows that FairFN is a sensitive algorithm to catch minority groups in the Synthetic-unequal dataset, which is particularly important for practical problems under majority-unbalanced data.
In addition to its fairness performance, FairFN has the advantage of automatically stopping once the appropriate number of communities is reached, without requiring prior knowledge of the number of communities, which is an important feature for real-world community partitioning.

As shown in Table \ref{table_fairness}, we observe that FR of FN outperforms VFC, BFKM and FCE in the Adult dataset, highlighting the neglect of these three algorithms for protected groups in some communities, while they have better AWD. Similarly, in the Bank dataset, FN demonstrates better AWD than BFKM, which can be interpreted in a opposite way. Regarding modularity (see Table \ref{table_modularity}), FairFN also achieves high values, even surpassing FN in some datasets. This suggests that incorporating protected group information can enhance the effect of vanilla greedy community partitioning methods.
Overall, our algorithm achieves excellent results in both fairness and modularity. This success is attributed to our novel fairness-modularity, which brings a fresh perspective to the problem of fair community partitioning.

\section{Discussions and Conclusions}
\label{section_conclusions}

In this paper, we focus on the problem of fair community partitioning and propose a novel fairness-modularity. We prove that minimizing this modularity is equivalent to partitioning communities with fairness, opening up new opportunities for modularity optimization-based community partitioning. Based on this new modularity optimization problem, we present a general solution framework and provide a simple upgrade to the classic FN algorithm, resulting in the efficient FairFN algorithm. Experimental results demonstrate that this simple algorithmic change yields impressive computational outcomes. Compared to baseline algorithms, FairFN significantly outperforms them in both fairness and community structure.

While FairFN demonstrates excellent effect in achieving fair community partitioning, we acknowledge that its time complexity of $O(n(m+n))$ may pose scalability challenges for extremely large networks, which is a common challenge in fair community partitioning. However, this limitation can be solved by further work. Fairness-modularity can be integrated into CNM, Louvain, and many other low-time-complexity algorithms with the guidance of our general framework. In addition, we believe that the FairFN algorithm is already sufficient to demonstrate the significant role of fairness-modularity in fair community partitioning.




Notice that maximum modularity corresponds to an obvious community structure compared with the random network and high aggregation in communities. The minimum modularity corresponds to a random partition and fairness, in another way, homogeneity. It provides a perspective of probability to consider fairness.



\bibliographystyle{icml2025}
\bibliography{reference.bib}


\newpage
\appendix

\section{Proof of Lemma \ref{thm_corr}}
\label{appendix_lemma_proof}
\begin{proof}
    It is obvious that
    $m^{directed}=\sum_i{k^{in}_i}+\sum_i{k^{out}_i}=2m$. By the modularity definition \eqref{eq_modularity_directed} of any directed graph $G_d$, we prove that
    \begin{align}
        Q&=\frac{1}{m^{directed}}\sum_{ij}\sum_{u} \left[A_{ij}-\frac{k^{in}_ik^{out}_j}{m^{directed}}\right]S_{iu}S_{ju}\\
        &=\frac{1}{2m}\sum_{ij}\sum_{u} \left[A_{ij}-\frac{k_ik_j}{2m}\right]S_{iu}S_{ju}.
    \end{align}
\end{proof}

\section{Proof of Theorem \ref{thm-fm}}
\label{appendix_theorem_proof}
\begin{proof}
    Considering $\Tr\left(S^TDS\right)$, it can be expanded into a summation
    \begin{align}
        \Tr\left(S^TDS\right)&=\sum_{ij} \sum_{u}D_{ij}\delta_{C_u}(i,j)\\
        &=\sum_{ij}\sum_{u}\sum_{w}\delta_{C_u}(i,j)\delta_{P_w}(i,j)\\
        &=\sum_{u}\sum_{w}\left(\sum_{v_i\in C_u\cap P_w}1\right)^2\\
        &=\sum_{u=1}^k\sum_{w=1}^r |C_u\cap P_w|^2,
    \end{align}
    where $\delta_{P_w}(i,j)=1$ if $v_i,v_j$ are in the same $P_w$, and $\delta_{P_w}(i,j)=0$ otherwise. Similarly, $\delta_{C_u}(i,j)=1$ if $v_i,v_j$ are in the same $C_u$, and $\delta_{C_u}(i,j)=0$ otherwise.
    
    Then considering
    \begin{equation}
        \Tr\left(S^TK^P(K^P)^TS\right)=\|(K^P)^TS\|_F^2,
    \end{equation}
    $i$-th entry of $(K^P)^TS$ is $\sum_{w=1}^r|C_u\cap P_w||P_w|$. Thus
    \begin{equation}
        \|(K^P)^TS\|_F^2=\sum_{u=1}^k\left(\sum_{w=1}^r|C_u\cap P_w||P_w|\right)^2.
    \end{equation}
    
    Using Cauchy-Schwarz Inequality, we have
    \begin{equation}\label{cauchy_schwarz}
        \left(\sum_{w=1}^r|C_u\cap P_w||P_w|\right)^2\le \sum_{w=1}^r|C_u\cap P_w|^2\sum_{w=1}^r|P_w|^2.
    \end{equation}
    Thus
    \begin{equation}
    \begin{split}
        Q^P&=\frac{1}{2m^P}\left[\sum_{u=1}^k\sum_{w=1}^r |C_u\cap P_w|^2-\frac{1}{2m^P}\sum_{u=1}^k\left(\sum_{w=1}^r|C_u\cap P_w||P_w|\right)^2\right]\\
        &\ge \frac{1}{2m^P}\left[\sum_{u=1}^k\sum_{w=1}^r |C_u\cap P_w|^2-\frac{1}{2m^P}\sum_{w=1}^r|C_u\cap P_w|^2\sum_{w=1}^r|P_w|^2\right]\\
        &=0.
    \end{split}
    \label{eq_fm_eq0}
    \end{equation}
    
    The equality of \eqref{cauchy_schwarz} and \eqref{eq_fm_eq0} holds when
    \begin{equation} \label{eq_equality}
        \frac{|C_u\cap P_1|}{|P_1|}=\frac{|C_u\cap P_2|}{|P_2|}=\cdots=\frac{|C_u\cap P_r|}{|P_r|}=c
    \end{equation}
    for any $u=1,\cdots,k$.
    
    Observing that $\sum_{w=1}^r|C_u\cap P_w|=|C_u|$ and $\sum_{w=1}^r|P_w|=n$, we obtain
    \begin{equation}
        c=\frac{|C_u|}{n}.
    \end{equation}
    So
    \begin{equation}
        \frac{|C_u\cap P_w|}{|C_u|}=\frac{|P_w|}{n}
    \end{equation}
    for any $u$ and $w$, which means that the partitioning is fair.
\end{proof}

\section{Properties of fairness-modularity}
\label{appendix_property}

The range of the {\em fairness-modularity} $Q^P$ can be further obtained as follows:
\begin{corollary}
    $0\le Q^P\le 1-\frac{1}{2m^P}$.
    \label{thm_range}
\end{corollary}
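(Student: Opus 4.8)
The plan is to establish the two inequalities separately. The lower bound $Q^P\ge 0$ is exactly the content of Theorem \ref{thm-fm} (inequality \eqref{eq_fm_eq0}), so nothing new is needed there; the whole task is the upper bound. I would work directly from the expanded expression for $Q^P$ derived in the proof of Theorem \ref{thm-fm}, namely
\begin{equation*}
Q^P=\frac{1}{2m^P}\left[\sum_{u=1}^k\sum_{w=1}^r |C_u\cap P_w|^2-\frac{1}{2m^P}\sum_{u=1}^k\Big(\sum_{w=1}^r|C_u\cap P_w||P_w|\Big)^2\right],
\end{equation*}
and then bound the positive (first) term from above and the subtracted (second) term from below.

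For the first term I would use that, for nonnegative numbers, the sum of squares never exceeds the square of the sum: $\sum_{u}|C_u\cap P_w|^2\le\big(\sum_u|C_u\cap P_w|\big)^2=|P_w|^2$, where the equality uses $\sum_u|C_u\cap P_w|=|P_w|$ since the communities partition $V$. Summing over $w$ gives $\sum_{u,w}|C_u\cap P_w|^2\le\sum_w|P_w|^2=2m^P$, so the first term contributes at most $1$. This is simply the familiar fact that the fraction of edges of $G^P$ lying inside communities is at most one.

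For the subtracted term, set $S_u=\sum_{w}|C_u\cap P_w||P_w|$. A direct computation using $\sum_u|C_u\cap P_w|=|P_w|$ shows $\sum_u S_u=\sum_w|P_w|^2=2m^P$. The Cauchy-Schwarz inequality (equivalently, convexity of $t\mapsto t^2$) then yields $\sum_u S_u^2\ge(\sum_u S_u)^2/k=(2m^P)^2/k$, so the subtracted term $\frac{1}{(2m^P)^2}\sum_u S_u^2$ is at least $\frac{1}{(2m^P)^2}\cdot\frac{(2m^P)^2}{k}=\frac1k$. Combining the two bounds gives $Q^P\le 1-\frac1k$, and it remains to replace $k$ by $2m^P$.

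The last step is the only place requiring a small structural observation, and it is the mild obstacle of the argument: I would show $k\le 2m^P$. Since there are at most $n$ nonempty communities, $k\le n=\sum_w|P_w|$; and because every protected group is nonempty, $|P_w|\ge1$ gives $|P_w|^2\ge|P_w|$, hence $2m^P=\sum_w|P_w|^2\ge\sum_w|P_w|=n\ge k$. Therefore $\frac1k\ge\frac{1}{2m^P}$ and $Q^P\le 1-\frac{1}{2m^P}$, completing the proof. In short, the substance is recognizing that the subtracted quantity is a sum of squares of a partition of $2m^P$ into $k$ parts, so Cauchy-Schwarz controls it from below, after which the elementary comparison $k\le 2m^P$ closes the gap; everything else is bookkeeping already performed in the proof of Theorem \ref{thm-fm}.
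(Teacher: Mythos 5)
Your proof is correct, but it takes a genuinely different route from the paper's. The paper never bounds an arbitrary partition: it exhibits the single partition $\mathbb{C}=\mathbb{P}$ (each protected group forming one community), evaluates its fairness-modularity, and appeals to the clique-partition maximality result it cites for graphs that are disjoint unions of cliques. Your argument instead bounds $Q^P$ uniformly over all partitions: the first term $\frac{1}{2m^P}\sum_{u,w}|C_u\cap P_w|^2\le 1$ via sum-of-squares versus square-of-sums across communities, and the subtracted term $\frac{1}{(2m^P)^2}\sum_u S_u^2\ge\frac1k$ via Cauchy--Schwarz applied to the identity $\sum_u S_u=2m^P$, yielding the intermediate bound $Q^P\le 1-\frac1k$ before closing with $k\le n\le 2m^P$. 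What your route buys is self-containedness and, in fact, completeness: the paper's computation only certifies one candidate partition, relying on an external maximality fact for the rest, and its algebra contains a slip --- for $\mathbb{C}=\mathbb{P}$ one has $\sum_u\bigl(\sum_w|P_u\cap P_w||P_w|\bigr)^2=\sum_w|P_w|^4$, not $\sum_w|P_w|^2$, so the clique partition's fairness-modularity is $1-\sum_w|P_w|^4/(2m^P)^2$, which is strictly below $1-\frac{1}{2m^P}$ whenever some $|P_w|\ge 2$. So the stated inequality is true, and your proof establishes it rigorously for every partition (also revealing that the bound is attained only in degenerate cases such as all-singleton groups), whereas the paper's argument, even after repairing the algebra and the appeal to maximality, only identifies a near-extremal configuration; conversely, your bound $1-\frac1k$ is sharper when the number of communities is small, which the paper's formulation does not expose.
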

\begin{proof}
    According to the equation \eqref{eq_fm_eq0}, the lower bound of fairness-modularity is 0. The upper bound of fairness-modularity is similar to the clique \citep{np_completeness}. Suppose that the protected group network is divided into $r$ communities and each maximal clique is a community (in other words, it means $\mathbb{C}=\mathbb{P}$). This partition has modularity
    \begin{align}
        &\quad\frac{1}{2m^P}\left[\sum_{u=1}^k\sum_{w=1}^r |P_u\cap P_w|^2-\frac{1}{2m^P}\sum_{u=1}^k\left(\sum_{w=1}^r|P_u\cap P_w||P_w|\right)^2\right]\\
        &=\frac{1}{2m^P}\left[\sum_{w=1}^r|P_w|^2-\frac{1}{2m^P}\sum_{w=1}^r|P_w|^2\right]\\
        &=1-\frac{1}{2m^P}.
    \end{align}
\end{proof}

We can also prove that the fairness-modularity in the situation where every vertex belongs to a different community is
\begin{equation}
    \frac{1}{2m^P}\left(n-\frac{1}{2m^P}\sum_{w=1}^r|P_w|^3\right).
\end{equation}
This shows that the situation in which each node belongs to a different community is not a solution of $Q^P=0$. It excludes some trivial cases.

From the equality \eqref{eq_equality}, it is easy to see that
\begin{proposition}
    Given the fair community partition $\mathbb{C}=\{C_1,C_2,\cdots,C_k\}$, if the two communities $C_i$ and $C_j$ are merged into one community $\widetilde{C} = C_i\cup C_j$, then the new community partition
    \[
    \mathbb{C}=(\{C_1,C_2,\cdots,C_k\}\backslash\{C_i,C_j\})\cup\{\widetilde{C}\}
    \]
    is also a fair community partition.
\end{proposition}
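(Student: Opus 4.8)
The plan is to argue directly from the fairness characterization of Theorem~\ref{thm-fm}: a partition is fair exactly when $\frac{|C_u \cap P_w|}{|C_u|} = \frac{|P_w|}{n}$ holds for every community $C_u$ and every protected group $P_w$. Since the merge modifies only the two communities $C_i$ and $C_j$ and leaves all others intact, every community $C_u$ with $u \neq i,j$ still satisfies the ratio condition inherited from the original fair partition. Hence the whole argument reduces to checking that the single new community $\widetilde{C} = C_i \cup C_j$ meets the same condition.

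First I would invoke the disjointness of $C_i$ and $C_j$, which holds because they are distinct blocks of a partition. This yields the two additive identities $|\widetilde{C} \cap P_w| = |C_i \cap P_w| + |C_j \cap P_w|$ and $|\widetilde{C}| = |C_i| + |C_j|$, valid for every $w$, reducing the claim to an elementary statement about combining fractions that share a common value.

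Next I would substitute the fairness of $C_i$ and $C_j$, in the form $|C_i \cap P_w| = \tfrac{|P_w|}{n}|C_i|$ and $|C_j \cap P_w| = \tfrac{|P_w|}{n}|C_j|$, into the first identity and sum over the two pieces. This gives $|\widetilde{C} \cap P_w| = \tfrac{|P_w|}{n}(|C_i|+|C_j|) = \tfrac{|P_w|}{n}|\widetilde{C}|$, so dividing by the positive quantity $|\widetilde{C}|$ (positive since $C_i$ is nonempty) recovers the target ratio $\frac{|\widetilde{C} \cap P_w|}{|\widetilde{C}|} = \frac{|P_w|}{n}$. Equivalently, one may run the argument through the equality condition~\eqref{eq_equality}: since $\frac{|C_i \cap P_w|}{|P_w|} = \frac{|C_i|}{n}$ and $\frac{|C_j \cap P_w|}{|P_w|} = \frac{|C_j|}{n}$ are both independent of $w$, additivity makes $\frac{|\widetilde{C} \cap P_w|}{|P_w|} = \frac{|\widetilde{C}|}{n}$ independent of $w$ as well, which is exactly the fairness of $\widetilde{C}$.

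I do not anticipate a substantial obstacle; the proof is essentially the observation that a ratio common to two disjoint sets is preserved under their union (the mediant property). The only step deserving explicit mention is the appeal to disjointness, since it is precisely what makes the cardinalities additive and hence the averaging exact rather than approximate. Everything else is routine arithmetic combined with the fairness characterization already supplied by Theorem~\ref{thm-fm}.
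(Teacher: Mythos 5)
Your proof is correct and follows essentially the same route as the paper, which simply remarks that the proposition follows from the equality condition \eqref{eq_equality}: the common ratio $|C_u\cap P_w|/|P_w| = |C_u|/n$ is preserved under disjoint union by additivity of cardinalities. You merely spell out the additivity and division steps that the paper leaves implicit, so there is no substantive difference in approach.
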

This result implies that the number of communities does not affect the minimum of $Q^P$. Thus, the best number of communities depends only on the modularity $Q$.

\section{Datasets details}
\label{appendix_dataset}

\textbf{LFR benchmark:} The network generated by the LFR benchmark has the following parameters: 1000 nodes, an average degree exponent of 2, a community size exponent of 1.1, and a mixing parameter of 0.1, which features a minimum degree of 20 and a maximum degree of 100. The protected groups of vertices in the LFR network of Figure \ref{fig_lfr_prop} and Figure \ref{fig_lfr_metric} are uniformly sampled from the distribution $(0.5,0.5)$.

For the protected groups of Figure \ref{fig_lfr_net}, we randomly assign a probability $p_k$ ranging from 0.2 to 0.8 for each community $k$ and generate protected groups in the community $k$ from a binomial distribution $B(1,p_k)$. The method can generate unfair communities, while the proportion of protected groups in the whole network keeps nearly $1:1$.

\textbf{Synthetic clustering datasets:} The protected groups are obtained by random sampling of a given distribution. Synthetic-equal dataset has a distribution of $(0.5,0.5)$, which is a balanced dataset. Synthetic-unequal dataset has a distribution of $(0.02,0.48,0.5)$, which is an unbalanced dataset.

\textbf{Real datasets:} Adult dataset contains 32561 samples. Five numeric features of the dataset are chosen for using, and the protected group is gender.
Bank dataset contains 41188 samples. Six numeric attributes are chosen, and the protected group is marital status. We remove the samples with an "unknown" marital status.
Census1990 dataset is a large dataset that contains 2458285 samples. Twenty-five numeric attributes is chosen, and the protected group is gender.
Creditcard dataset contains 30000 samples. Attributes LIMIT\_BAL, AGE, BILL\_AMT1$\sim$6 and PAY\_AMT1$\sim$6 are chosen, and the protected group is sex.
Diabetic dataset contains 101766 samples. Seven numeric attributes are chosen, and the protected group is gender.
All of them are subsampled to 5000 samples.

FairFN processes network data, and we construct the network using $k$-nearest neighbor ($k=10$), as in VFC and FSCN.

\section{Parameters of algorithms}
\label{appendix_parameter}

The parameters of algorithms used in the comparison are listed in Table \ref{table_parameters}. Specifically, in FCE, $\lambda_2$ is constant 0, the number of clusters is 5, and the ensemble of labels is 10.

All experiments have been done on an NVIDIA RTX 3050 Laptop. The running time of the FairFN algorithm in each dataset is less than 2 minutes.

\begin{table}[ht]
    \caption{Parameters of algorithms.}
    \label{table_parameters}
    \begin{center}
    \begin{small}
    \begin{tabular}{ccccccc}
        \toprule
        \multirow{2}{*}{Datasets} & \multicolumn{2}{c}{VFC} & FSCN & BFKM & FCE & FairFN\\
        \cmidrule{2-7} & k & $\lambda$ & k & k & $\lambda_1$ & $\alpha$\\
        \midrule
        Synthetic-equal & 10 & 10 & 5 & 10 & 0.001 & 4\\
        Synthetic-unequal & 10 & 10 & 5 & 10 & 0.001 & 64\\
        Adult & 10 & 9000 & 5 & 10 & 1 & 8\\
        Bank & 10 & 9000 & 5 & 6 & 1 & 4\\
        Census1990 & 20 & 500000 & 5 & 5 & 1 & 4\\
        Creditcard & 10 & 9000 & 5 & 5 & 0.001 & 4\\
        Diabetic & 10 & 9000 & 5 & 5 & 0.001 & 4\\
        \bottomrule
    \end{tabular}
    \end{small}
    \end{center}
\end{table}

\section{More results in the LFR network}
Further results are presented for the LFR network with 1000 nodes. Figure \ref{fig_lfr_prop} shows the proportion of protected groups in different communities. In Figure \ref{fig_lfr_fn_prop}, the difference in the proportion of the protected groups between a community and the whole network is up to 0.15 while the one in Figure \ref{fig_lfr_fair_fn_prop} is only up to 0.005. This marked difference highlights the improved fairness achieved by FairFN compared to FN.

\begin{figure}[ht]
    \begin{center}
    \subfloat[FN]{\includegraphics[width=.48\textwidth]{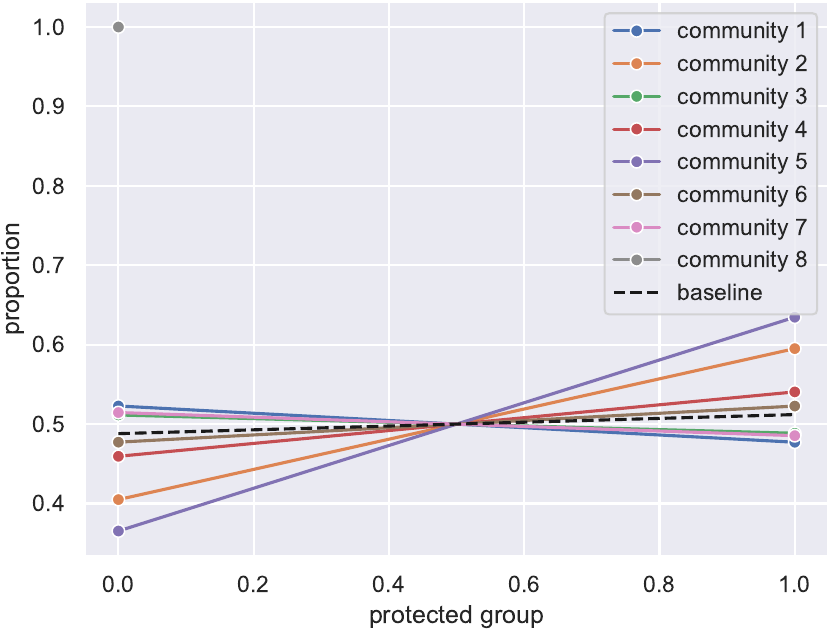}
    \label{fig_lfr_fn_prop}}
    \hfill
    \subfloat[FairFN]{\includegraphics[width=.48\textwidth]{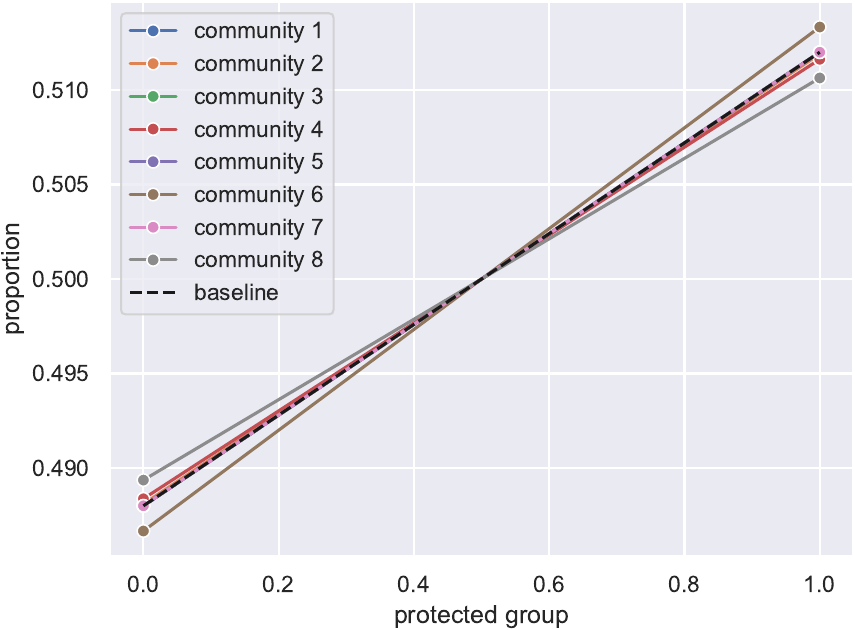}
    \label{fig_lfr_fair_fn_prop}}
    \caption{The proportion of protected groups in different communities. The X-axis represents different protected groups in a community. The results show that the proportion of the protected group in communities detected by FairFN is nearly consistent with one in the whole network, which indicates that the fairness of FairFN is better than FN.}
    \label{fig_lfr_prop}
    \end{center}
\end{figure}

\section{More experiments in unbalanced datasets and weighted networks}
\subsection{Unbalanced datasets}
To further validate the robustness of FairFN, we conducted additional experiments on even more unbalanced datasets. Similar to synthetic datasets, every unbalanced dataset contains 3000 samples, and contains at least one type of protected group with a small proportion. The protected groups are randomly sampled from given distributions. The $\alpha$ parameter of FairFN is set to 64.

The results are shown in Table \ref{table_unbalanced}. Notably, in the $(0.01, 0.49, 0.5)$ dataset, where the group sizes are $(36, 1454, 1510)$ nodes, FairFN identifies 17 communities, demonstrating that the minority group (0.01) is evenly distributed rather than being isolated or marginalized. These results confirm that FairFN maintains fairness and performs well even in highly unbalanced scenarios. FairFN demonstrates stable performance on unbalanced datasets. The FR value decreases on extremely unbalanced datasets, but this is because the FR varies greatly when the number of vertices is small. A stable AWD value is more indicative of the robustness of FairFN, which indicates that our algorithm is well-suited for handling unbalanced data.

\begin{table}[htbp]
    \caption{The results of FairFN on unbalanced datasets. The results demonstrate that FairFN has excellent robustness on unbalanced datasets.}
    \label{table_unbalanced}
    \begin{center}
    \begin{normalsize}
    \begin{tabular}{ccccc}
    \toprule
    Proportion of protected groups & FR & AWD & $Q$ & $Q^P\times 10^6$\\
    \midrule
    $(0.02,0.05,0.93)$ & 0.7996 & 0.0024 & 0.8841 & 0.2906\\
    $(0.05,0.45,0.5)$ & 0.8203 & 0.0057 & 0.8977 & 3.2764\\
    $(0.02,0.48,0.5)$ & 0.7383 & 0.0046 & 0.9036 & 1.9500\\
    $(0.01,0.49,0.5)$ & 0.4845 & 0.0048 & 0.9114 & 2.0704\\
    \bottomrule
    \end{tabular}
    \end{normalsize}
    \end{center}
\end{table}

\subsection{Weighted networks}
In the general framework, we impose a strong restriction on fairness-modularity, that is, only communities with $\Delta Q^P < 0$ are merged. This ensures that, regardless of how the weights of the edges in the observed network change, the algorithm can achieve good fairness. To illustrate this statement, we tested FairFN in the weighted observed network. The experiment was conducted on an LFR network with 1000 nodes and 6 ground-truth communities. The very high intra-community edge weights (50) and inter-community edge weights (10) are assigned in the observed network. The protected groups of vertices were randomly sampled from the distribution $(0.5, 0.5)$. In this experiment, the total edge weight of the observed network far exceeded that of the protected group network. An unweighted observed network was included as a control. The network structure of the unweighted observed network is the same as that of the weighted network, except that all the edge weights are set to 1. The $\alpha$ parameter of FairFN was set to 8 in all cases.

The results are shown in Table \ref{table_weighted}, where FN is included as the control group for comparison. FairFN achieved better performance than FN in terms of fairness, that is, a higher FR, and lower AWD and $Q^P$, indicating that FairFN can still achieve fair community partitioning on weighted networks. Moreover, the fairness performance of the FairFN algorithm is similar both on the weighted observed network and the unweighted observed network, which shows that the total edge weight of the observed network does not affect the fairness of FairFN.

\begin{table}[htbp]
    \caption{The results of FN and FairFN on weighted and unweighted LFR networks. The results of our FairFN algorithm are better than FN. Moreover, regardless of whether the observed network is weighted or not, the fairness performance of FairFN is not significantly affected.}
    \label{table_weighted}
    \begin{center}
    \begin{tabular}{cccccc}
        \toprule
        Algorithm & observed network & FR & AWD & $Q$ & $Q^P\times 10^2$ \\
        \midrule
        FN & weighted & 0.8023 & 0.0227 & 0.7348 & 0.0678 \\
        FairFN & weighted & 0.9860 & 0.0014 & 0.7080 & 0.0002 \\
        FN & unweighted & 0.8162 & 0.0217 & 0.6189 & 0.0612 \\
        FairFN & unweighted & 0.9951 & 0.0009 & 0.5950 & 0.0001 \\
        \bottomrule
    \end{tabular}
\end{center}
\end{table}

Based on the performance of FairFN on the extremely unbalanced dataset (Table \ref{table_unbalanced}) and the weighted network (Table \ref{table_weighted}), we can provide an explanation for why the FairFN algorithm can achieve such excellent effects: The optimization of the modularity $Q$ in the observed network is decoupled from the optimization of the fairness modularity $Q^P$ in the protected group network. During the iterative process, merging nodes in the unfair direction increases the edge weights between the node and other nodes in the same protected group. If this increase is too large, it can lead to a positive increment in $Q^P$, which will cause the next merge to only select nodes from different protected groups. This negative feedback mechanism ensures that the community remains relatively fair at all times.

\section{In-depth evaluation of the threshold of $\alpha$}
\label{appendix_threshold}
We have also calculated the threshold of $\alpha$ in synthetic clustering datasets and real-world datasets. The results are shown in Figure \ref{fig_point_alpha}. The increase in the threshold of $\alpha$ at the final stage of each figure is very sharp. But in the early stage, the value is really low. There is a transition point that distinguishes these two stages, regardless of whether the degree distribution of vertices follows a power-law distribution (Figure \ref{fig_lfr_alpha}) or a single-point distribution (Figure \ref{fig_point_alpha}). This is somewhat similar to the concept of the phase transition. Actually, this phenomenon also exists in the maximum increment of modularity because the threshold of $\alpha$ is equal to $2m\Delta Q_{max}$. We conjecture that this phenomenon may reflect some important properties of the community partitioning algorithm based on the increment of modularity, which requires further research in the future. 

\begin{figure}[htbp]
    \begin{center}
    \subfloat[Synthetic-equal]{\includegraphics[width=.47\textwidth]{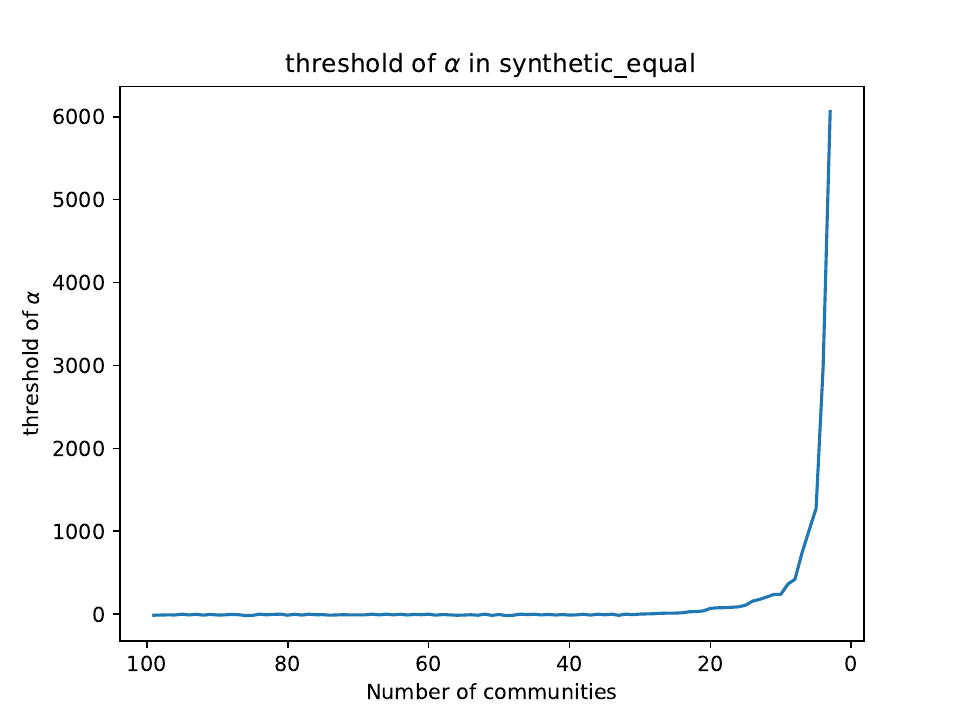}}
    \subfloat[Synthetic-unequal]{\includegraphics[width=.47\textwidth]{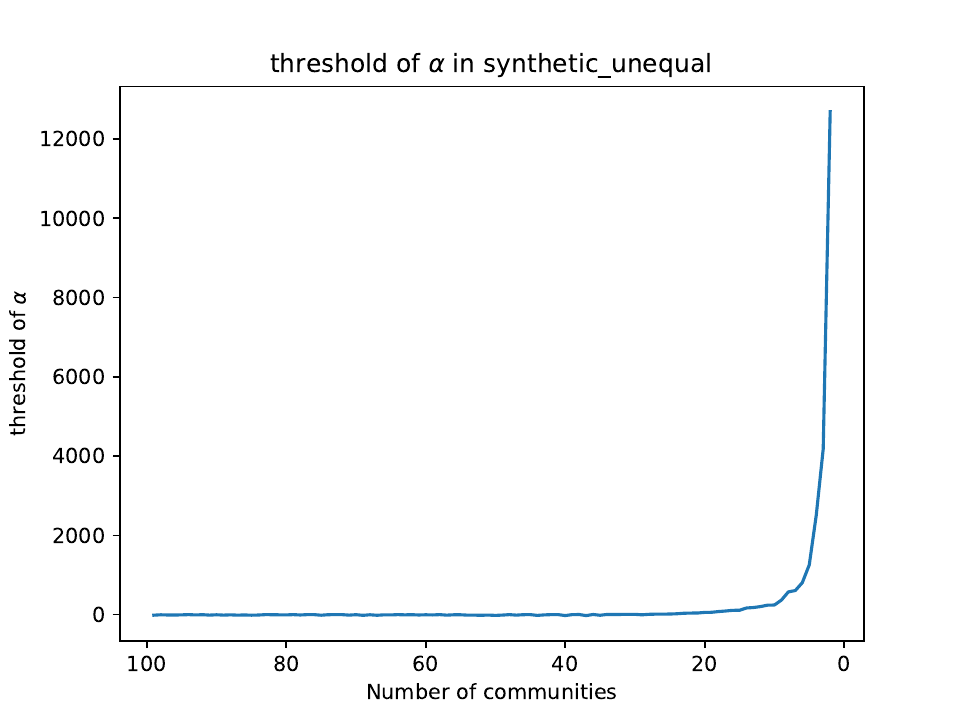}}
    \quad
    \subfloat[Adult]{\includegraphics[width=.47\textwidth]{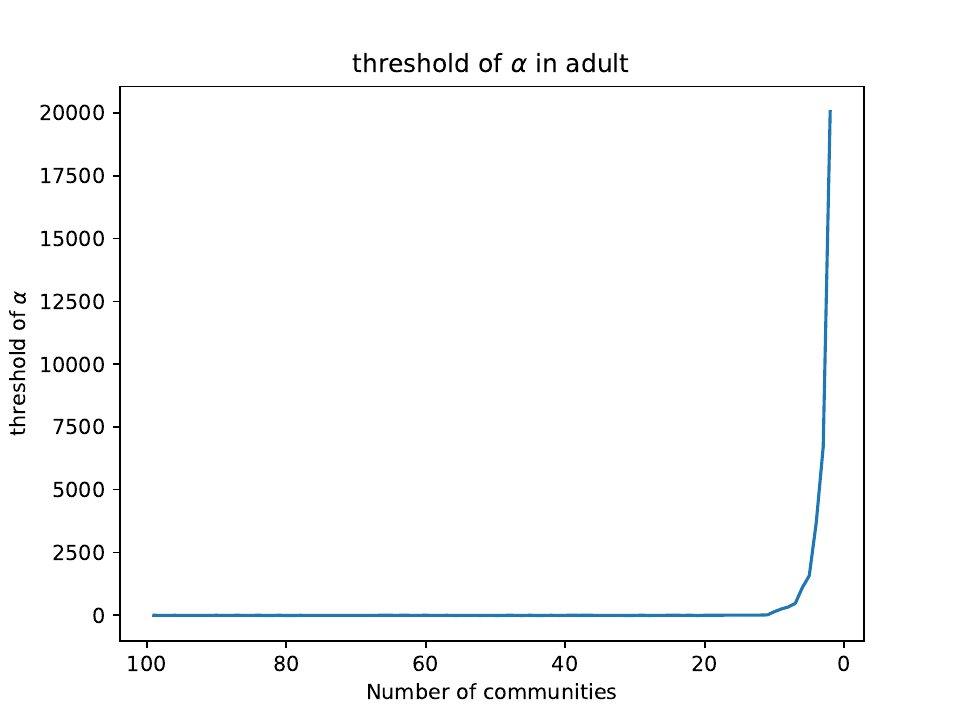}}
    \subfloat[Bank]{\includegraphics[width=.47\textwidth]{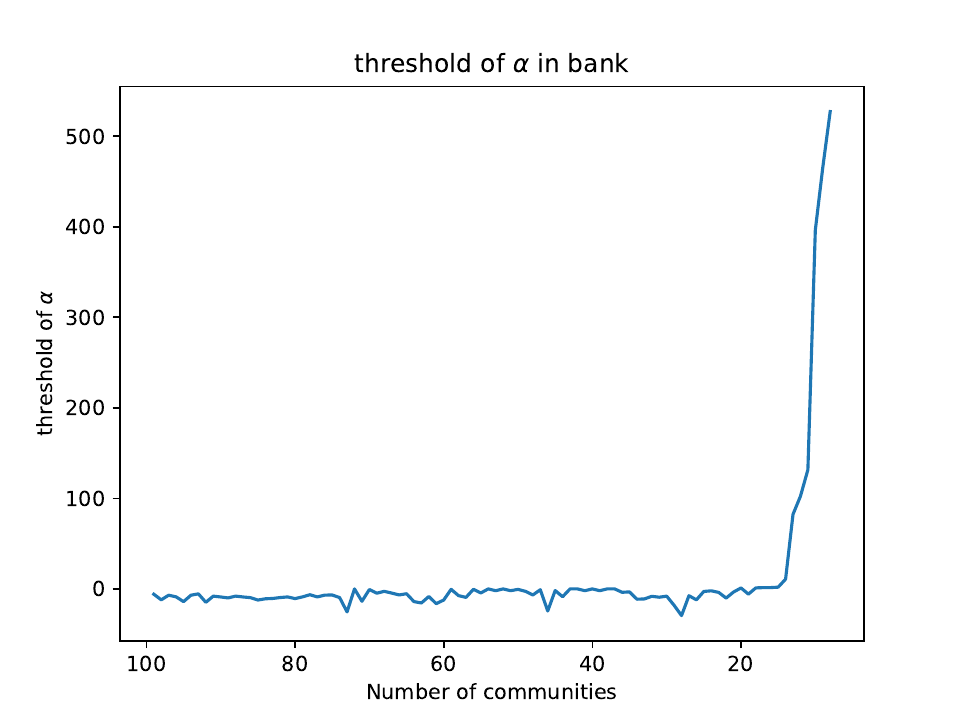}}
    \quad
    \subfloat[Census]{\includegraphics[width=.3\textwidth]{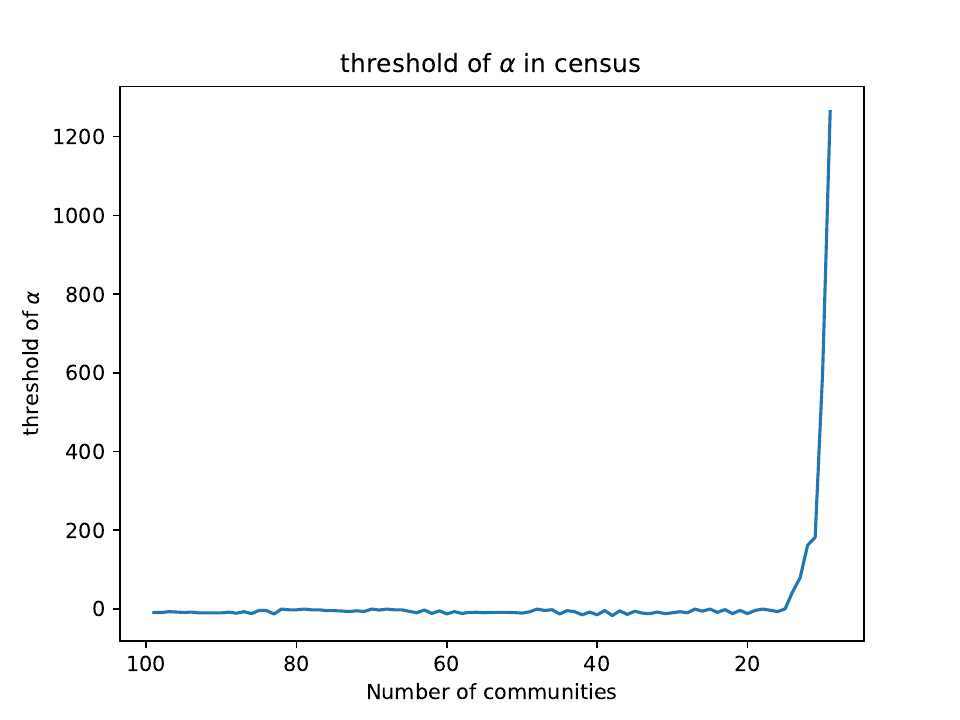}}
    \subfloat[Creditcard]{\includegraphics[width=.3\textwidth]{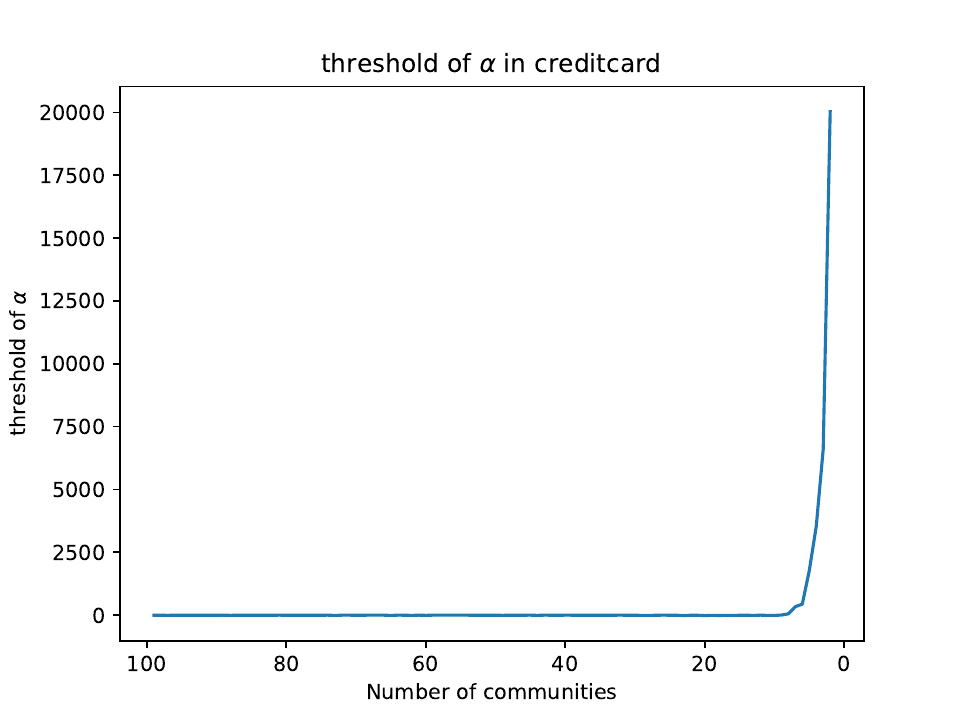}}
    \subfloat[Diabetic]{\includegraphics[width=.3\textwidth]{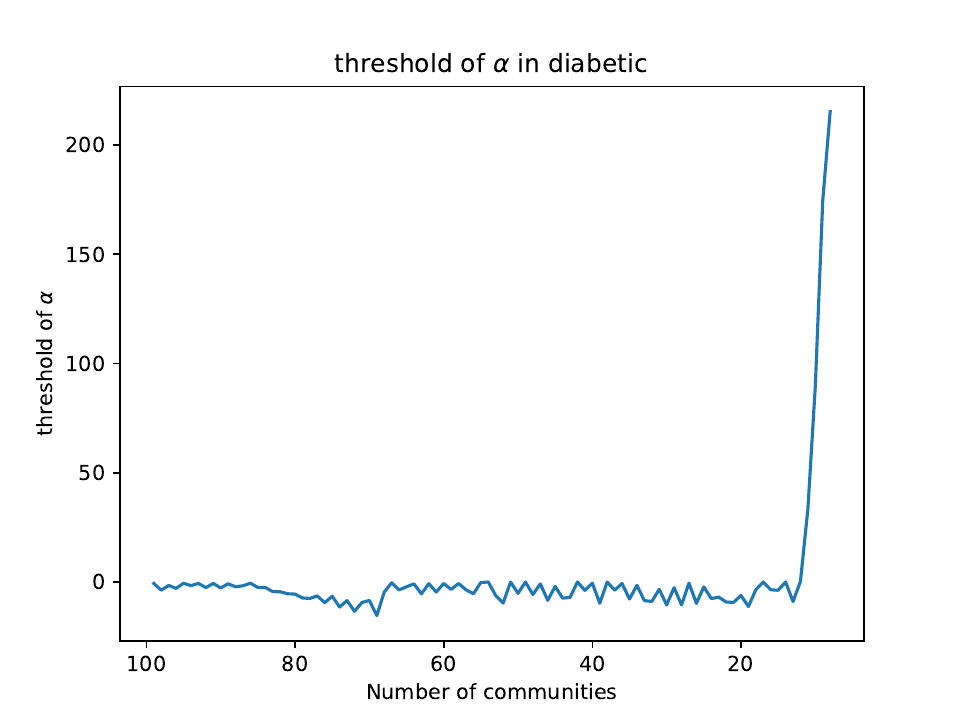}}
    \caption{The threshold of $\alpha$ in synthetic clustering datasets and real-world datasets. The number of communities ranges from 100 to finish due to the threshold in the beginning stage being nearly zero.}
    \label{fig_point_alpha}
    \end{center}
\end{figure}

\section{Broader Impacts}
\label{appendix_impacts}

Fairness-modularity proposed by this paper is very effective and profound in fair community partitioning, which pioneers fair hierarchical clustering in networks. Fairness-modularity and its optimization algorithm will bring broader impacts. In recommendation systems and advertising placement, fair community partitioning helps to reduce bias, break echo chambers, and strengthens communication among different groups. In social network analysis, it helps to ensure the equal right to speak of minority groups and gives better visibility to minority groups. In brain networks, it may help find patterns shared by different individuals, making the results more broadly useful. At the same time, fairness needs to be used carefully. If fairness is applied blindly, it may ignore important social differences. Our method allows users to adjust the fairness level depending on the task. With careful use, this work has the potential to benefit both technology and society in a responsible way.

\end{document}